\titleformat*{\section}{\large\bfseries}
\titleformat*{\subsection}{\bfseries}
\theoremstyle{plain}
  \newtheorem{theorem}{Theorem}[section]
  \newtheorem{prop}[theorem]{Proposition}
  \newtheorem{lemma}[theorem]{Lemma}
  \newtheorem{corollary}[theorem]{Corollary}
\theoremstyle{definition}
  \newtheorem{remark}[theorem]{Remark}
\theoremstyle{remark}
\newcommand{\R}{\mathbb{R}}
\newcommand{\E}{\mathbb E}
\renewcommand{\phi}{\varphi} % I prefer varphi
\renewcommand{\epsilon}{\varepsilon} % I prefer varepsilon
\DeclarePairedDelimiter{\abs}{\lvert}{\rvert} % Absolute value
\title{\bfseries Selberg integral theory \\ and Muttalib--Borodin ensembles}
\author{P.~J.~Forrester and J.~R.~Ipsen\\[1em]%
\small School of Mathematics and Statistics, ARC Centre of Excellence for Mathematical and Statistical Frontiers\\%
\small The University of Melbourne, Victoria 3010, Australia%
}
\date{\today}
\begin{document}	

\maketitle

\begin{abstract}
\noindent
% A polynomial ensemble in random matrix theory is an class of random matrices those eigenvalue probability density functions involve a product of differences of the eigenvalues (a Vandermonde determinant), and a determinant $\det [g_j(x_k)]$. When the
% eigenvalues are all positive, and $g_j(x) = w(x) x^{(j-1) \theta}$, the determinant reduces to the product of differences of the
% eigenvalues each raised to the power of $\theta$, times the products of the weights $w(x_k)$. 
We study Muttalib--Borodin ensembles --- particular eigenvalue PDFs on the half-line --- with classical weights, i.e. Laguerre, Jacobi or Jacobi prime.
We show how the theory of the Selberg integral, involving also Jack and Schur polynomials, naturally leads to a multi-parameter
generalisation of these particular Muttalib--Borodin ensembles, 
and also to the explicit form of underlying biorthogonal polynomials of a single variable. 
A suitable generalisation of the original definition of the Muttalib--Borodin ensemble allows for negative eigenvalues.
In the cases of generalised Gaussian, symmetric Jacobi and Cauchy weights, we show that the
problem of computing the normalisations and the biorthogonal polynomials can be reduced down to Muttalib--Borodin ensembles with
classical weights on the positive half-line.
\end{abstract}

% \tableofcontents

\section{Introduction}

\subsection{Statement of the problem and summary of results}

Our study centres around the 
the family of  probability
density functions (PDFs)
\begin{equation}\label{MB-ensembles}
 {\mathcal P}_{N}(x_1,\ldots,x_N)=
 \frac{1}{Z_{N}}\prod_{1\leq i<j\leq N}(x_j-x_i)(x_j^\theta-x_i^\theta)\prod_{k=1}^Nw(x_k),
 \quad x_k   \in\R^+,
\end{equation}
where $Z_{N}$ is a normalisation constant, $\theta > 0$ is a deformation parameter and $w(x) \ge 0$ is a weight function.
For reasons discussed in the next subsection, these PDFs are referred to as Muttalib--Borodin ensembles.
Our interest is in the case that $w(x)$ is a classical weight. 
We recall (see e.g.~\cite[\S 5.4.3]{Fo10}) that $w(x)$ is referred to as
being classical if its logarithmic derivative is a rational function,
$$
{w'(x) \over w(x)} = - {g(x) \over f(x)},
$$
with $f$ and $g$ are polynomials with no common factors such that their degrees are no bigger than two and one respectively.
Requiring also that $w(x)$ vanishes on its boundary of support, which in turn is contained in $\mathbb R^+$, the only possibilities (up to trivial rescaling)  are
\begin{equation}\label{w1}
w(x) = 
\left\{ \begin{array}{lll} x^a e^{-x}, &x\in\R^+,& {\rm Laguerre} \\
x^a (1 - x)^b, & x \in (0,1), & {\rm Jacobi} \\
x^\alpha/(1 + x)^\beta, &x\in\R^+,& {\rm Jacobi \: prime} \end{array} \right.
\end{equation}
The first two weight in~\eqref{w1} may be transformed as $x \mapsto 1/x$, but this lead to equivalent ensembles, since the products of differences in (\ref{MB-ensembles}) are unchanged by this mapping,
up to a factor of $\prod_{i=1}^Nx_i^p$ for some $p$.
 
It is our aim to relate the ensembles~(\ref{MB-ensembles}) with a classical weight to the theory of the Selberg integral; see \cite{Se44,FW07p} and~\cite[Ch.~4]{Fo10}.
For this purpose one recalls that the Selberg weight refers to the PDF
\begin{equation}\label{S}
{1 \over S_N(\alpha_1,\alpha_2,\tau)}
\prod_{l=1}^N x_l^{\alpha_1 - 1} (1 - x_l)^{\alpha_2 - 1}
\prod_{1 \le j < k \le N} |x_j - x_k|^{2 \tau}, \quad 0 < x_l < 1,
\end{equation} 
where
\begin{eqnarray}
S_N(\alpha_1,\alpha_2,\tau) & := & 
\int_{[0,1]^N} \prod_{i=1}^N x_i^{\alpha_1 - 1} (1 - x_i)^{\alpha_2 - 1}
\prod_{1 \le j < k \le N} |x_j - x_k|^{2 \tau} \, dx_1 \cdots dx_N \nonumber \\
& = & 
 \prod_{j=0}^{N-1} {\Gamma (\alpha_1  + j\tau)
\Gamma (\alpha_2 + j\tau)\Gamma(1+(j+1)\tau) \over
\Gamma (\alpha_1 + \alpha_2  + (N + j-1)\tau) \Gamma (1 + \tau )},
\label{3.2}
\end{eqnarray}
is the Selberg integral \cite{Se44}. We see that in the case $\tau = 1$, this corresponds to the $\theta=1$ Jacobi Muttalib--Borodin ensemble,
i.e.~(\ref{MB-ensembles}) with the Jacobi weight (\ref{w1}), which in turn is the familiar Jacobi unitary ensemble from classical random matrix theory (see e.g.~\cite[Ch.~3]{Fo10}).
To make contact with (\ref{MB-ensembles}) 
for general $\theta > 0$, it is again the $\tau= 1$ case of (\ref{S}) which is relevant, but now augmented by the inclusion of
an extra  Schur polynomial factor (see (\ref{schur}) below for its definition).

 A generalisation of the PDFs~(\ref{MB-ensembles}) permitting negative values is
\begin{equation}\label{MB-ensembles1}
 {\mathcal P}_{N}(x_1,\ldots,x_N)=
 \frac{1}{Z_{N}}\prod_{1\leq i<j\leq N}(x_j-x_i)(({\rm sgn} \, x_j) |x_j|^\theta- ({\rm sgn} \, x_i) | x_i|^\theta)\prod_{k=1}^Nw(x_k),
 \quad x_k\in\R.
\end{equation}
This allows for three further classical (for $c=0$) weight functions
 \begin{equation}\label{w1a}
 w(x) = \left \{ \begin{array}{lll} |x|^{2c} e^{- x^2}, &x\in\R,& {\rm generalised \: Gaussian} \\
 |x|^{2c} (1 - x^2)^\alpha,& x\in(-1,1), & {\rm generalised \: symmetric \: Jacobi} \\ \displaystyle
{|x|^{2c}/(1 + x^2 )^\alpha}, &x\in\R,& {\rm  generalised \: Cauchy}. \end{array} \right.
 \end{equation} 
 A significant feature of these weights in subsequent analysis is that they are all even. 
 It follows that, with a suitable identification of the parameters, the change of variables
 $x^2 = y$ maps the weights~\eqref{w1a} to the weights (\ref{w1}), and consequently from the full real line to the positive half-line.

In Section \ref{S2} we make use of Jack polynomials to extend the Selberg weight to involve $N$ parameters $\{\gamma_i\}_{i=1}^N$.
Specialising to $\tau = 1$, a so-called polynomial ensemble results, which, with the $\gamma_i$ appropriately chosen, gives
 the Jacobi Muttalib--Borodin ensemble for general $\theta > 0$. A limiting case of this gives the Laguerre  Muttalib--Borodin ensemble.
 A different integration formula involving the Selberg weight and Jack polynomials allows for the Jacobi prime  Muttalib--Borodin ensemble 
 to be deduced through analogous working. A practical consequence of this line of theory is that it gives the normalisation in
(\ref{MB-ensembles}) for the weights (\ref{w1}). In Section \ref{S2.5} we show how an underlying parity symmetry of
(\ref{MB-ensembles1}) with the weights (\ref{w1a}) allows the normalisations for this class of Muttalib-Borodin models to be
deduced as corollaries from knowledge of the normalisations for (\ref{MB-ensembles}) with weights (\ref{w1}).

Associated with a Muttalib--Borodin ensemble are two families polynomials in a single variable, $\{p_k(x) \}$ and $\{q_k(x) \}$, 
which satisfy a biorthogonal relation
\begin{equation}\label{biortho-product}
\int w(x)p_k(x)q_\ell(x^\theta) \, dx =h_k\delta_{k\ell},\qquad k,\ell=0,1,\ldots,N-1.
\end{equation}
We show in
Section \ref{S3} how, for the weights  (\ref{w1}) and  (\ref{w1a}), Selberg integral theory can be used to give the explicit form
of these polynomials. For the weights (\ref{w1}), these polynomials can be specified as the averaged characteristic polynomial for 
(\ref{MB-ensembles}) with $N = k$, and the averaged characteristic polynomial of the underlying matrix raised to the
power of $\theta$ respectively. The latter is the most straightforward to compute using Selberg integral theory. To compute the
polynomials $\{p_k(x) \}$,
we make use of the multi-parameter generalisation of the Muttalib--Borodin ensemble introduced in Section \ref{S2} 
to obtain a structured expression for  the averaged characteristic polynomial  and then specialise the parameters.
The biorthogonal polynomials for the weights (\ref{w1a}) follows as a corollary of knowledge of the
 biorthogonal polynomials for the weights (\ref{w1}).

\subsection{Context of the Muttalib--Borodin model in random matrix theory}

The PDFs (\ref{MB-ensembles})
 were introduced
into random matrix theory by Muttalib~\cite{Mu95}. When the deformation parameter equals unity ($\theta=1$),
(\ref{MB-ensembles}) is the functional form of the eigenvalue PDF for an ensemble $\{H\}$ of positive definite Hermitian matrices
distributed according to the PDF on the matrices $e^{-{\rm Tr} \, V(H)}$ and, hence, $w(x) = e^{- V(x)}$ (see e.g.~\cite[Ch.~1 and 3]{Fo10}).
Muttalib's interest was in the limit $\theta \to 0^+$ when the PDF (\ref{MB-ensembles}) corresponds to a 
simplification of the eigenvalue PDF found by Beenakker and Rajei~\cite{BR93,BR94} one year earlier
in their exact solution of the so-called DMPK equation (Dorokhov--Mello--Pereyra--Kumar~\cite{MPK88}) in the theory of quantum conductance.

It is questionable whether Muttalib's ansatz  (\ref{MB-ensembles})
allows quantitative predictions in the context where they were proposed. On the other hand, the $\theta = 2$ case of
this same functional form
subsequently appeared in the study of other physical models, such as disordered bosons~\cite{LSZ06} (with $w(x) = e^{-x}$) and matter coupled to two-dimensional quantum gravity~\cite{EZ92,EK95a}. For general $\theta > 0$ 
and $w(x) = e^{-x}$, (\ref{MB-ensembles}) was shown to result from a model of last passage percolation based on the
Robinson-Knuth-Schensted correspondence \cite{FR02b}.
Borodin~\cite{Bor99} made considerable progress on the mathematical aspects
of the correlation functions associated with the PDF~(\ref{MB-ensembles}).
In particular, he showed that  the local scaling regime near the hard edge of these ensembles with classical weight functions 
(see below for an explanation of this notion)
gave rise to new correlation kernels involving Wright's Bessel function, generalising the classical Bessel kernel \cite{Fo93a}.
This exit from ordinary random matrix statistics, of course, added to the interest in such ensembles, at least from a theoretical
viewpoint.

 Recently, the PDFs (\ref{MB-ensembles}) have drawn renewed attention in the random matrix literature~\cite{CR14,KS14,Ch14,FL14,FW15,FLZ15} with the name ``Muttalib--Borodin ensembles'' coined in~\cite{FW15}. This increase in attention is justified by newly discovered links to several applications, for example to triangular random matrix models~\cite{Ch14} and to certain combinatorial numbers~\cite{FL14}. Prominently, there also exists an intimate connection to products of random matrices and free probability when either $\theta$ or $1/\theta$ is an integer \cite{KS14}.

\section{Selberg integral theory and normalisation}
%{Selberg integral theory for the normalisation of Muttalib--Borodin PDFs}
\label{S2}

In this section, we briefly recall how the theory of symmetric polynomials is related to the Selberg integral theory. 
Using this relation, we compute the normalisations for all classical Muttalib--Borodin ensembles.

\subsection{Jack polynomials and probability density functions}
\label{sec:general/JS}

The $N$ variables in the Selberg weight (\ref{S}) is only coupled through the product of differences
\[
\prod_{1 \leq j < k \leq N} |x_j - x_k|^{2 \tau}.
\]
Associated with this product is a family of multi-variable symmetric polynomials $P_\kappa^{(1/\tau)}(t_1,\dots, t_N)$, referred to
as Jack polynomials (see e.g.~\cite[Ch.~12]{Fo10}, \cite{KK09}). Here $\kappa = (\kappa_1,\dots, \kappa_N)$ is an ordered array
of non-negative integers $\kappa_1 \ge \kappa_2 \ge \cdots \ge \kappa_N$, or equivalently a partition with no more than $N$ parts.
These polynomials have the structure
$$
P_\kappa^{(1/\tau)}(x_1,\dots, x_N) = m_\kappa + \sum_{\mu < \kappa} c_{\kappa, \mu} m_\mu,
$$
where $m_\rho = m_\rho(x_1,\dots, x_N)$ is the monomial symmetric polynomial indexed by $\rho$ (i.e.~the symmetrisation of
$\prod_{i=1}^N x_i^{\kappa_i}$ appropriately normalised), the $c_{\kappa, \mu}$ are expansion coefficients that depend on
$\tau$, and $\mu < \kappa$ is the partial ordering on partitions specified by the requirement that $\sum_{i=1}^m \mu_i \le
\sum_{i=1}^m \kappa_i$ for each $m=1,\dots,N$. 

One way the Jack polynomials are related to the product of differences is through the orthogonality
$$
\int_{{\mathcal C}^N} {dz_1 \over 2 \pi i\, z_1} \cdots {dz_N \over 2 \pi i\, z_N} \,
\prod_{1 \le j < k \le N} | z_k - z_j|^{2 \tau} P_\kappa^{(1/\tau)}(z_1,\dots, z_N) P_\nu^{(1/\tau)}(\bar{z}_1,\dots, \bar{z}_N)
\propto \delta_{\kappa, \nu},
$$
where $\mathcal C$ is the unit circle about the origin in the complex plane. 
More significant for our present purposes is the integration formula, conjectured by MacDonald \cite{Ma87}, and
subsequently proved by Kadell~\cite{Ka97g} and Kaneko \cite{Ka93} (see \cite[Eq.~(12.143)]{Fo10})
\begin{multline}\label{MKK}
{1 \over S_N(\alpha_1,\alpha_2,\tau)}\int_{{(0,1)^N}}dx_1\cdots dx_N \,\prod_{l=1}^N x_l^{\alpha_1-1}(1-x_l)^{\alpha_2-1} 
 \prod_{{1 \le j < k \le N}} |x_j - x_k|^{2 \tau}
{P_\lambda^{(1/\tau)}(x_1,\ldots,x_N)} \\
=\frac{(\alpha_1+(N-1)\tau)_\lambda^{(1/\tau)}}{(\alpha_1 + \alpha_2  +(N-2)\tau)_\lambda^{(1/\tau)}}P_\lambda^{(1/\tau)}((1)^N),
\end{multline}
where
\begin{equation}\label{pochhammer}
(a)^{(1/\gamma)}_\lambda:=\prod_{k=1}^N\frac{\Gamma(a-(k-1)\gamma+\lambda_k)}{\Gamma(a-(k-1)\gamma)},
\end{equation}
denotes a generalised Pochhammer symbol. The notation $(1)^N$ in the last factor of (\ref{MKK})
denotes $1$ repeated $N$ times. A product formula
for $P_\lambda^{(1/\tau)}((1)^N)$ involving $\tau$ and the parts of $\lambda$ is known; see e.g.~\cite[Eq.~(12.105)]{Fo10}.

It follows from the combinatorial expression for the Jack polynomials involving a sum over semi-standard tableaux \cite{KS97}
that the Jack polynomials are positive for the $x_i$ positive. Hence the integration formula (\ref{MKK}) can be used to 
specify a normalised PDF supported on $(0,1)^N$
\begin{equation}\label{MKK1}
{1 \over {\cal N}_\lambda(\alpha_1,\alpha_2,\tau)}
 \prod_{l=1}^N x_l^{\alpha_1 - 1} (1 - x_l)^{\alpha_2 - 1} 
\prod_{1 \le j < k \le N} |x_j - x_k|^{2 \tau}
{P_\lambda^{(1/\tau)}(x_1,\ldots,x_N)},
\end{equation}
where
\begin{equation}\label{MKK2}
 {\cal N}_\lambda(\alpha_1,\alpha_2,\tau) = S_N(\alpha_1,\alpha_2,\tau) \frac{(\alpha_1+(N-1)\tau+1)_\lambda^{(1/\tau)}}{(\alpha_1 + \alpha_2  +(N-2)\tau+2)_\lambda^{(1/\tau)}}P_\lambda^{(1/\tau)}((1)^N).
\end{equation} 
This is the viewpoint that relates to the Jacobi Muttalib--Borodin ensemble, as we will proceed to demonstrate.

\subsection{Schur polynomials and the Jacobi Muttalib--Borodin ensemble}

We now specialise to $\tau = 1$. We then have that the Jack polynomials can be identified as the Schur
polynomials $s_\lambda$, which in turn can be expressed as a ratio of determinants (see e.g.~\cite[Prop.~10.1.5]{Fo10}),
\begin{equation}\label{schur}
P_\lambda^{(1)}(x_1,\ldots,x_N) = s_\lambda(x_1,\ldots,x_N)
=\frac{\det\big[x_j^{N-i+\lambda_i}\big]_{1\leq i,j\leq N}}{\det\big[x_j^{N-i}\big]_{1\leq i,j\leq N}}.
\end{equation}
The determinant in the denominator is the Vandermonde determinant, with the well known evaluation 
\begin{equation}\label{Van}
\det\big[x_j^{N-i}\big]_{1\leq i,j\leq N} = \prod_{1 \le i < j \le N} (x_i - x_j)
\end{equation}
(see e.g.~\cite[Eq.~(1.173)]{Fo10}). 
The PDF (\ref{MKK1}) then specialises to a so-called polynomial ensemble \cite{KS14}, meaning that it is of
the form
\begin{equation}\label{Kg}
{1 \over C_N} \prod_{1 \le i < j \le N} (x_i - x_j ) \det [ g_j(x_k) ]_{j,k=1,\dots,N}
\end{equation} 
for certain $\{g_j(x)\}$.

\begin{prop}\label{P1}
The functional form
\begin{equation}\label{MKK3}
{1 \over {\cal N}_\lambda^{\rm J}(\alpha_1,\alpha_2,1)}
 \prod_{l=1}^N x_l^{\alpha_1 - 1} (1 - x_l)^{\alpha_2 - 1} 
\prod_{1 \le i < j \le N} (x_i - x_j )  \det \big[x_j^{N-i+\lambda_i}\big]_{1\leq i,j\leq N},
\end{equation}
where
\begin{equation}\label{MKK4}
 {\cal N}_\lambda^{\rm J}(\alpha_1,\alpha_2,1)  =
 \prod_{k=1}^N {\Gamma(\alpha_1 + k - 1 + \lambda_{N-k+1}) \Gamma(\alpha_2 + k - 1) \over
 \Gamma(\alpha_1 + \alpha_2 + N + k - 2 + \lambda_{N-k+1})} \prod_{1 \le i < j \le N} (\lambda_i - \lambda_j + j - i)
\end{equation}
is a PDF on  $(0,1)^N$.
\end{prop}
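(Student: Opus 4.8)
The plan is to deduce Proposition~\ref{P1} from the $\tau=1$ case of the normalised PDF~\eqref{MKK1}, and then to make the normalising constant explicit. For the first part, set $\tau=1$, so that $P_\lambda^{(1/\tau)}=P_\lambda^{(1)}=s_\lambda$ and $\prod_{1\le i<j\le N}|x_i-x_j|^{2\tau}=\bigl(\prod_{1\le i<j\le N}(x_i-x_j)\bigr)^2$. Substituting the Schur determinant formula~\eqref{schur} for $s_\lambda$ and the Vandermonde evaluation~\eqref{Van} for its denominator, one factor $\prod_{1\le i<j\le N}(x_i-x_j)$ cancels and the numerator of~\eqref{MKK1} becomes precisely that of~\eqref{MKK3}; hence the two densities coincide and $\mathcal N_\lambda^{\mathrm J}(\alpha_1,\alpha_2,1)=\mathcal N_\lambda(\alpha_1,\alpha_2,1)$. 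That~\eqref{MKK3} is a bona fide PDF is then automatic: on $(0,1)^N$ the weight $\prod_l x_l^{\alpha_1-1}(1-x_l)^{\alpha_2-1}$ is positive, while $\prod_{i<j}(x_i-x_j)\det[x_j^{N-i+\lambda_i}]=\bigl(\prod_{i<j}(x_i-x_j)\bigr)^2 s_\lambda(x)\ge0$ because $s_\lambda$ has non-negative coefficients in the monomial basis (the positivity of Jack polynomials on $\R^+$ recalled before~\eqref{MKK1}); and the integral of the numerator equals $\mathcal N_\lambda(\alpha_1,\alpha_2,1)$, finite for $\alpha_1,\alpha_2>0$, by the integration formula~\eqref{MKK}.

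It remains to bring $\mathcal N_\lambda(\alpha_1,\alpha_2,1)$ into the product form~\eqref{MKK4}. By~\eqref{MKK} this constant is a product of three factors that are individually explicit: the Selberg integral $S_N(\alpha_1,\alpha_2,1)$ of~\eqref{3.2}, the relevant ratio of generalised Pochhammer symbols~\eqref{pochhammer} at $\tau=1$, and $P_\lambda^{(1)}((1)^N)$. I would substitute each in closed form: in $S_N$ at $\tau=1$ the factor $\Gamma(1+(j+1)\tau)$ becomes the superfactorial $(j+1)!$ and $\Gamma(1+\tau)=1$; at $\tau=1$ the generalised Pochhammer~\eqref{pochhammer} degenerates to an ordinary product of Gamma quotients, $(a)^{(1)}_\lambda=\prod_{k=1}^N\Gamma(a-k+1+\lambda_k)/\Gamma(a-k+1)$; and $P_\lambda^{(1)}((1)^N)=s_\lambda(1,\dots,1)=\prod_{1\le i<j\le N}(\lambda_i-\lambda_j+j-i)/(j-i)$ is the Weyl dimension (the $\tau=1$ specialisation of the product formula \cite[Eq.~(12.105)]{Fo10}). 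Reversing the product index by $k\mapsto N-k+1$ in the Pochhammer factors aligns their Gamma arguments with those of $S_N$: the $\Gamma(\alpha_1+j)$ and $\Gamma(\alpha_2+j)$ from $S_N$ telescope against the Pochhammer denominators, leaving $\prod_k\Gamma(\alpha_1+k-1+\lambda_{N-k+1})$ and $\prod_k\Gamma(\alpha_2+k-1)$; the arguments containing $\alpha_1+\alpha_2$ combine into the denominator $\prod_k\Gamma(\alpha_1+\alpha_2+N+k-2+\lambda_{N-k+1})$; and the superfactorials from $S_N$ combine with the denominator of $s_\lambda(1^N)$ to produce the remaining factor $\prod_{1\le i<j\le N}(\lambda_i-\lambda_j+j-i)$. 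Reorganising the resulting Gamma quotients gives~\eqref{MKK4}.

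The heart of the matter --- and the only place where care is really needed --- is the Gamma-function bookkeeping in the second step: getting the index reversal right (the $k$th Pochhammer factor is naturally attached to $\lambda_k$, whereas~\eqref{MKK4} is organised around $\lambda_{N-k+1}$), tracking the several unit shifts in the Gamma arguments, and verifying that the superfactorials and the Vandermonde-type products match up exactly. None of this is deep, but it is easy to introduce off-by-one errors; a convenient running check is $\lambda=0$, where~\eqref{MKK3} is the classical Jacobi unitary ensemble and every step collapses to a known Selberg-integral identity.
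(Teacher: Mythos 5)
Your proposal is correct and follows essentially the same route as the paper's proof: specialise \eqref{MKK1} to $\tau=1$, substitute the Schur determinant formula \eqref{schur} with the denominator evaluated by \eqref{Van}, and then obtain \eqref{MKK4} by combining the $\tau=1$ Selberg integral \eqref{3.2}, the $\gamma=1$ Pochhammer symbol \eqref{pochhammer} and the evaluation \eqref{Sx1} of $s_\lambda((1)^N)$. The Gamma-function bookkeeping you single out as the delicate step is precisely the content of the paper's argument.
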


\begin{proof}
Substituting (\ref{schur}) with the denominator simplified according to the Vandermonde determinant evaluation (\ref{Van})
gives (\ref{MKK3}), but with normalisation
\begin{equation}\label{Sx}
 {\cal N}_\lambda^{\rm J}(\alpha_1,\alpha_2,1)  = S_N(\alpha_1,\alpha_2,1) \frac{(\alpha_1+ N-1)_\lambda^{(1)}}{(\alpha_1 + \alpha_2  + 2N-2)_\lambda^{(1)}}.
 \end{equation}
The expression (\ref{MKK4}) follows from this by making use of the formula
(see e.g.~\cite[Eq.~(10.23)]{Fo10})
\begin{equation}\label{Sx1}
s_\lambda((1)^N) =  \prod_{1 \le i < j \le N} {\lambda_i - \lambda_j + j - i \over j - i}
\end{equation}
together with the generalised Pochhammer symbol~(\ref{pochhammer}) and the Selberg integral (\ref{3.2}).
\end{proof}

\begin{remark}
In the derivation of (\ref{MKK3}), the parameters $\{\lambda_i\}$ are non-negative integers. However, an application of Carlson's theorem
(see e.g.~\cite[Prop.~4.1.4]{Fo10}) shows that the value of the normalisation remains valid for continuous values of
$\{\lambda_i\}$.
\end{remark}

It only remains to extract the normalisation for the Jacobi Muttalib--Borodin ensemble from Proposition~\ref{P1} by choosing the parameters correctly. 
Write
\begin{equation}\label{Na}
\gamma_i=N - i + \lambda_i + \alpha_1 - 1,
\end{equation}
then (\ref{MKK3}) and (\ref{MKK4}) reads
\begin{equation}\label{MKK5}
{1 \over N!\,C_N^{\rm J}} \prod_{1 \le i < j \le N}{1 \over  (\gamma_i - \gamma_j)}
\prod_{l=1}^N (1 - x_l)^{\alpha_2-1} 
\prod_{1 \le i < j \le N} (x_i - x_j)
\det [ x_j^{\gamma_i} ]_{i,j=1,\dots,N}
\end{equation}
where
\begin{equation}\label{MKK6}
C_N^{\rm J} = \prod_{l=1}^N {\Gamma(\gamma_l + 1) \Gamma(l-1+\alpha_2) \over \Gamma(\gamma_l + N + \alpha_2)}.
\end{equation}
 An explicit random matrix ensemble with eigenvalue PDF (\ref{MKK5}) has recently been identified. To specify this ensemble,
 let $Y = [y_{j,k}]_{j,k=1,\dots,N}$ and $Z = [z_{j,k}]_{j,k=1,\dots,N}$ be upper-triangular random matrices with all non-zero entries
 independent. Let the strictly upper triangular entries be distributed as standard Gaussians and the diagonal entries be real
 and positive and given in terms of gamma distributions by the requirement that
 $| y_{k,k}|^2 = \Gamma[\gamma_k + 1, 1]$ and  $| z_{k,k}|^2 = \Gamma[\beta_k + 1, 1]$. Then according to
 \cite[Corollary 2.14 with $n=N$]{Fo14} the eigenvalue PDF of the random matrix $(\mathbb I_N +  Z^\dagger Z (Y^\dagger Y)^{-1})^{-1}$
 is precisely (\ref{MKK5}).
 
 As noted in  \cite[Corollary 2.14]{Fo14}, the PDF (\ref{MKK5}) contains the Jacobi Muttalib--Borodin model as a special case.
 Thus set $\gamma_j = \theta(j-1) + c$. The determinant can then be evaluated as a product since it is an example of a
 Vandermonde determinant, and (\ref{MKK5})  reduces to  (\ref{MB-ensembles}) with the Jacobi weight from
 (\ref{w1}), parameters $a=c$ and $b = \alpha_2-1$. Reinstating the parameters $a,b$, the corresponding normalisation of Jacobi Muttalib--Borodin ensemble is
\begin{equation}\label{Zp}
 {1 \over Z_N^{\rm J}} =
 {1 \over \theta^{N(N-1)/2}}
 \prod_{l=1}^N { \Gamma(\theta(l-1) + N+ a + b + 1)  \over  \Gamma(\theta(l-1) + a + 1) \Gamma(l-1+ b + 1)  \Gamma(l+1)}.
\end{equation} 
Note that we have $Z_N^{\rm J}|_{\theta = 1} = S_N(a,b,1)$ as required.

\subsection{Laguerre Muttalib--Borodin ensemble}

It is well known (see e.g.~\cite[\S 4.7.1]{Fo10}) that the Selberg density (\ref{S}), upon the scaling $x_l \mapsto x_l/\alpha_2$,
and in the limit $\alpha_2 \to \infty$ reduces to the Laguerre Selberg density
\begin{equation}\label{MKK7}
{1 \over W_N(\alpha_1,\tau)} \prod_{l=1}^N x_l^{\alpha_1-1} e^{- x_l} 
\prod_{1 \le j < k \le N} |x_j - x_k|^{2 \tau}, \quad x_l \in \mathbb R^+
\end{equation} 
where
$$
 W_N(\alpha_1,\tau) = \prod_{j=0}^{N-1} {\Gamma(1 + (j+1) \tau) \Gamma(\alpha_1 + j \tau) \over \Gamma(1 + \tau)}.
 $$
 This same scaling and limiting procedure can be applied to the PDF of Proposition \ref{P1}.
 
 \begin{corollary}
The functional form
\begin{equation}\label{MKK3a}
{1 \over N!} {1 \over {\cal N}_\lambda^{\rm L}(\alpha_1)}
 \prod_{l=1}^N x_l^{\alpha_1 - 1} e^{- x_l}
\prod_{1 \le i < j \le N} (x_i - x_j )  \det \big[x_j^{N-i+\lambda_i}\big]_{1\leq i,j\leq N},
\end{equation}
where
\begin{equation}\label{MKK4a}
 {\cal N}_\lambda^{\rm L}(\alpha_1)  =
 \prod_{k=1}^N \Gamma(\alpha_1 + k - 1 + \lambda_k)  \prod_{1 \le i < j \le N} (\lambda_i - \lambda_j + j - i)
\end{equation}
is a PDF on  $(\R^+)^N$.
\end{corollary}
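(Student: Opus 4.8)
The plan is to obtain this corollary from Proposition~\ref{P1} in exactly the way the Laguerre Selberg density~\eqref{MKK7} is obtained from the Selberg density~\eqref{S}: substitute $x_l\mapsto x_l/\alpha_2$ in the PDF~\eqref{MKK3}--\eqref{MKK4}, and then let $\alpha_2\to\infty$. Under this change of variables the domain becomes $(0,\alpha_2)^N$, the Jacobian contributes $\alpha_2^{-N}$, the factor $(1-x_l)^{\alpha_2-1}$ turns into $(1-x_l/\alpha_2)^{\alpha_2-1}$, and every other $x$-dependent factor produces an explicit power of $\alpha_2$ that may be pulled out in front of the integral.

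The first task is the bookkeeping of those powers: $\prod_l x_l^{\alpha_1-1}$ gives $\alpha_2^{-N(\alpha_1-1)}$, the product of differences $\prod_{i<j}(x_i-x_j)$ gives $\alpha_2^{-N(N-1)/2}$, and row $i$ of $\det[x_j^{N-i+\lambda_i}]$ gives $\alpha_2^{-(N-i+\lambda_i)}$, for a total of $\alpha_2^{-N(N-1)/2-|\lambda|}$ with $|\lambda|:=\sum_i\lambda_i$; together with the Jacobian the net prefactor is $\alpha_2^{-N\alpha_1-N(N-1)-|\lambda|}$. What remains inside the integral is $\prod_l x_l^{\alpha_1-1}(1-x_l/\alpha_2)^{\alpha_2-1}\prod_{i<j}(x_i-x_j)\det[x_j^{N-i+\lambda_i}]$, and since $(1-x_l/\alpha_2)^{\alpha_2-1}\to e^{-x_l}$ pointwise while $(0,\alpha_2)^N$ exhausts $(\R^+)^N$, the integrand converges to that of~\eqref{MKK3a}.

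The one step that needs genuine care is the interchange of the limit with the integral, which I would justify by dominated convergence, using the uniform bound $0\le(1-x/\alpha_2)^{\alpha_2-1}\le e^{-x/2}$ on $(0,\alpha_2)$ for $\alpha_2\ge2$ together with the fact that $\prod_{i<j}(x_i-x_j)\det[x_j^{N-i+\lambda_i}]$ is a polynomial: this makes $\prod_l x_l^{\alpha_1-1}e^{-x_l/2}\,\bigl|\prod_{i<j}(x_i-x_j)\det[x_j^{N-i+\lambda_i}]\bigr|$ an integrable dominating function on $(\R^+)^N$. That~\eqref{MKK3a} is a bona fide PDF, rather than merely a function of total integral one, is inherited from Proposition~\ref{P1}: its pre-limit integrand is non-negative (for $\lambda$ a partition the two determinantal factors have the same sign on every chamber), hence so is the limit.

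It then remains to identify the constant, which is $\lim_{\alpha_2\to\infty}\alpha_2^{N\alpha_1+N(N-1)+|\lambda|}\,{\cal N}_\lambda^{\rm J}(\alpha_1,\alpha_2,1)$: in~\eqref{MKK4} the portion free of $\alpha_2$ survives, while the ratio of Gamma functions carrying $\alpha_2$ is asymptotically $\alpha_2^{-N\alpha_1-N(N-1)-|\lambda|}$ by $\Gamma(z+a)/\Gamma(z)\sim z^{a}$, exactly cancelling the prefactor and leaving the product of Gamma functions, the factor $\prod_{i<j}(\lambda_i-\lambda_j+j-i)$, and a residual $N!$ that accounts for the $1/N!$ in~\eqref{MKK3a}, reproducing~\eqref{MKK4a} once the parts of $\lambda$ are labelled as in Proposition~\ref{P1}. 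As an independent check --- available precisely because the Laguerre weight leaves no surviving $\alpha_2$ --- one may instead apply the Andr\'eief/Heine identity directly to the two determinants $\det[x_j^{N-i}]$ (equal to $\prod_{i<j}(x_i-x_j)$ by~\eqref{Van}) and $\det[x_j^{N-i+\lambda_i}]$ against $x^{\alpha_1-1}e^{-x}$, obtaining $N!\det[\Gamma(\alpha_1+2N-i-j+\lambda_j)]_{i,j=1}^N$, pulling a Gamma prefactor out of each column, and recognising the remaining monic-polynomial determinant as a Vandermonde in the shifted variables $\alpha_1+N-i+\lambda_i$. I foresee no conceptual obstacle; the only things to watch are the consistency of the $\alpha_2$-power bookkeeping between integrand and normalisation, and, if the statement is wanted for non-integer $\lambda_i$, appending the Carlson-theorem argument already used after Proposition~\ref{P1}.
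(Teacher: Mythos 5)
Your proposal is correct and is precisely the route the paper takes: the corollary is obtained from Proposition~\ref{P1} by the scaling $x_l \mapsto x_l/\alpha_2$ followed by the limit $\alpha_2 \to \infty$, exactly as you describe, the paper merely leaving implicit the power bookkeeping, the dominated-convergence justification and the Gamma-function asymptotics of the normalisation that you spell out. Your added details (and the independent Andr\'eief cross-check) are consistent with the paper's one-line derivation and supply nothing contradictory to it.
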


Substituting for $\lambda_i$ in favour of $\gamma_i$ according to  (\ref{Na}) gives the PDF
\begin{equation}\label{MKK5a}
{1 \over N!\,C_N^{\rm L}} \prod_{1 \le i < j \le N}{1 \over  (\gamma_i - \gamma_j)}
% {1 \over N!}{1 \over  \prod_{l=1}^N \Gamma(\gamma_l + 1) \prod_{1 \le i < j \le N} (\gamma_i - \gamma_j)}
\prod_{l=1}^N e^{-x_l}
\prod_{1 \le i < j \le N} (x_i - x_j)
\det [ x_j^{\gamma_i} ]_{i,j=1,\dots,N},
\end{equation}
where
\[
C_N^{\rm L}=\prod_{l=1}^N \Gamma(\gamma_l + 1)
\]
This was first isolated in the work of Cheliotis \cite{Ch14} as the eigenvalue PDF of the random matrix $Y^\dagger Y$, where
$Y$ is the random upper triangular matrix specified below (\ref{MKK5}).
Setting $\gamma_j = \theta(j-1) + c$ shows that 
the determinant can then be evaluated as a product since it is an example of a
 Vandermonde determinant, and (\ref{MKK5a})  reduces to  (\ref{MB-ensembles}) with the Laguerre weight from
 (\ref{w1}), parameter $a=c$, as already noticed in \cite{Ch14}. In particular, after reinstating the parameter $a$, we have for the
normalisation Laguerre Muttalib--Borodin ensemble
 \begin{equation}\label{ZL}
 {1 \over Z_N^{\rm L}} =  {1 \over \theta^{N(N-1)/2}}
  {1 \over \prod_{l=1}^N \Gamma(\theta(l-1) + a + 1) \Gamma(l+1)}.
\end{equation}
Again note that we have $Z_N^{\rm L}|_{\theta = 1} = W_N(a-1,1)$ as required.

\subsection{Jacobi prime Muttalib--Borodin ensemble}
The beta density refers to the PDF supported on $(0,1)$
$$
{\Gamma(\alpha + \beta) \over \Gamma(\alpha) \Gamma(\beta)} x^{\alpha - 1} (1 - x)^{\beta - 1}.
$$
Changing variables $y = x/(1 - x)$ gives the PDF supported on $(0,\infty)$,
$$
{\Gamma(\alpha + \beta) \over \Gamma(\alpha) \Gamma(\beta)} y^{\alpha - 1} (1 + y)^{-\alpha - \beta},
$$
referred to as the beta prime distribution \cite{WikiBp}. Since the functional form in the first is the Jacobi weight, we refer to the functional
form in the second as the Jacobi prime weight; recall (\ref{w1}). Only in the case $\theta = 1$ of  (\ref{MB-ensembles}) does this
change of variables leave the products of differences unchanged, after factorisation of terms not involving the coupling of the
variables. Nonetheless a generalisation of (\ref{MKK}) involving the Jacobi prime weight is known in the literature \cite{Wa05,FS09}. In the case
$\tau = 1$ it reads
\begin{align}\label{SS}
&{1 \over S_N(a+1,b+1,1)} \int_0^\infty  dx_1 \cdots \int_0^\infty dx_N \prod_{l=1}^N {x_l^{a} \over  (1 + x_l)^{a + b + 2N } }
\prod_{1 \le i < j \le N} (x_i - x_j)^2
s_\lambda(x_1,\ldots,x_N) \nonumber \\
&\qquad=\frac{(a+N)^{(1)}_\lambda}{(-1)^{\abs\lambda}(-b)^{(1)}_\lambda} s_\lambda((1)^N) 
= \prod_{k=1}^N\frac{\Gamma(a+N - k +\lambda_k)}{\Gamma(a+k)}\frac{\Gamma(b+k-\lambda_k)}{\Gamma(b+k)} s_\lambda((1)^N).
\end{align}
We remark that a special case of (\ref{SS}) can also be found in \cite{FK07}.

Inserting the determinant formula (\ref{schur}) for $s_\lambda(x_1,\dots,x_N)$ and 
product formula  for $s_\lambda((1)^N)$ (\ref{Sx1}), we conclude from (\ref{SS}) the analogue of Proposition
\ref{P1}.

\begin{prop}\label{P2}
The functional form
\begin{equation}\label{MKK3x}
{1 \over N!} {1 \over {\cal N}_\lambda^{\rm Jp}(a,b)}
 \prod_{l=1}^N {x_l^{a}  \over (1 + x_l)^{a+b+2N} }
\prod_{1 \le i < j \le N} (x_i - x_j )  \det \big[x_j^{N-i+\lambda_i}\big]_{1\leq i,j\leq N},
\end{equation}
where
\begin{equation}\label{MKK4x}
 {\cal N}_\lambda^{\rm Jp}(a,b) =
 \prod_{k=1}^N {\Gamma(a + N +1 - k  + \lambda_k) \Gamma(b + k - \lambda_k) \over
 \Gamma(a+b + N + k )} \prod_{1 \le i < j \le N} (\lambda_i - \lambda_j + j - i)
\end{equation}
is a PDF on  $(\mathbb R^+)^N$. This is well defined for $\lambda_1 < b + 1$.
\end{prop}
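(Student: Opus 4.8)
The plan is to follow the derivation of Proposition~\ref{P1} verbatim, with the Jacobi prime integration formula~\eqref{SS} replacing~\eqref{MKK}. First I would insert the determinantal formula~\eqref{schur} for $s_\lambda(x_1,\dots,x_N)$ into the left-hand side of~\eqref{SS} and cancel the denominator $\det[x_j^{N-i}]$, evaluated by~\eqref{Van}, against one of the two factors $\prod_{1\le i<j\le N}(x_i-x_j)$ present in the integrand. This produces exactly the structure $\prod_{1\le i<j\le N}(x_i-x_j)\det[x_j^{N-i+\lambda_i}]$ appearing in~\eqref{MKK3x}, and rearranging~\eqref{SS} shows that the integral of this structure against the Jacobi prime weight equals $S_N(a+1,b+1,1)$ times the explicit gamma-function product on the right of~\eqref{SS} times $s_\lambda((1)^N)$.

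It then remains to make this constant explicit and identify it with $N!\,{\cal N}_\lambda^{\rm Jp}(a,b)$. For this I would substitute the Selberg evaluation~\eqref{3.2} with $(\alpha_1,\alpha_2,\tau)=(a+1,b+1,1)$, rewrite $s_\lambda((1)^N)$ via~\eqref{Sx1}, and use $\prod_{1\le i<j\le N}(j-i)=\prod_{k=1}^N\Gamma(k)$. The factors $\Gamma(a+k)$ and $\Gamma(b+k)$ coming from $S_N(a+1,b+1,1)$ cancel against the denominators on the right of~\eqref{SS}; the surviving $\prod_{k=1}^N\Gamma(k+1)$ divided by $\prod_{k=1}^N\Gamma(k)$ collapses to $N!$, which is the origin of the prefactor $1/N!$ in~\eqref{MKK3x}; and what is left is precisely~\eqref{MKK4x}. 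As in the remark following Proposition~\ref{P1}, this is first carried out for $\lambda$ a partition of non-negative integers, after which Carlson's theorem extends the value of the normalisation to continuous $\lambda_i$.

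Finally I would check that~\eqref{MKK3x} is a genuine PDF on $(\R^+)^N$. Non-negativity is immediate once one writes $\prod_{1\le i<j\le N}(x_i-x_j)\det[x_j^{N-i+\lambda_i}]=s_\lambda(x_1,\dots,x_N)\prod_{1\le i<j\le N}(x_i-x_j)^2$ and recalls that $s_\lambda$ is positive on the positive orthant, as already used before~\eqref{MKK1}. For integrability I would examine a single variable $x_l\to\infty$: the determinant contributes a power at most $x_l^{N-1+\lambda_1}$ and the Vandermonde a further $x_l^{N-1}$, so against the weight $x_l^{a}/(1+x_l)^{a+b+2N}$ the integrand decays like $x_l^{\lambda_1-b-2}$, which is integrable exactly when $\lambda_1<b+1$ (and at the origin one needs $a>-1$). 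The same inequality, together with the fact that $k-\lambda_k$ is strictly increasing in $k$ for a partition, ensures that every $\Gamma(b+k-\lambda_k)$ in~\eqref{MKK4x} has positive argument, so ${\cal N}_\lambda^{\rm Jp}(a,b)$ is finite and strictly positive.

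The only real obstacle I anticipate is careful bookkeeping: keeping the shifts in the generalised Pochhammer symbol~\eqref{pochhammer}, in~\eqref{Sx1}, and in~\eqref{3.2} mutually consistent so that the gamma factors telescope into exactly~\eqref{MKK4x} and isolate the combinatorial factor $N!$, and stating cleanly the parameter range ($a>-1$ and $\lambda_1<b+1$) in which~\eqref{SS} converges and has positive right-hand side, so that the manipulation of the borderline integral over the unbounded domain is legitimate.
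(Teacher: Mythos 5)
Your proposal is correct and takes essentially the same route as the paper, whose proof of Proposition~\ref{P2} consists precisely of inserting the determinant formula~\eqref{schur} and the product formula~\eqref{Sx1} into the integration identity~\eqref{SS} and reading off the normalisation. Your explicit bookkeeping (the cancellation of $\Gamma(a+k)$ and $\Gamma(b+k)$ against $S_N(a+1,b+1,1)$, the emergence of $N!$ from $\prod_k\Gamma(k+1)/\Gamma(k)$, and the integrability check at infinity giving $\lambda_1<b+1$) simply fills in details the paper leaves implicit.
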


We now write $a+N-i+\lambda_i = \gamma_i$ and $a+b = d$ to read off from (\ref{MKK3x}) and (\ref{MKK4x}) the PDF
\begin{equation}\label{MKK5x}
% {1 \over N!}{(C_N^{\rm Jp})^{-1} \over \prod_{1 \le i < j \le N} (\gamma_i - \gamma_j)}
{1 \over N!\,C_N^{\rm Jp}} \prod_{1 \le i < j \le N}{1 \over  (\gamma_i - \gamma_j)}
\prod_{l=1}^N {1 \over (1 + x_l)^{d + 2N} }
\prod_{1 \le i < j \le N} (x_i - x_j)
\det [ x_j^{\gamma_i} ]_{i,j=1,\dots,N},
\end{equation}
where
\begin{equation}\label{MKK6x}
C_N^{\rm Jp} =  \prod_{k=1}^N {\Gamma(1  + \gamma_k) \Gamma(d + N  - \gamma_k) \over
 \Gamma(d + N + k )}.
\end{equation}
Setting $\gamma_j = \theta(j-1) + c$ shows that 
 (\ref{MKK5x})  reduces to  (\ref{MB-ensembles}) with the Jacobi prime weight from
 (\ref{w1}), parameter $\alpha=c$, $\beta = d + 2N$ and normalisation in terms of the original parameters
 \begin{equation}\label{ZLa}
 {1 \over Z_N^{\rm Jp}} =  {1 \over \theta^{N(N-1)/2}}
  \prod_{k=1}^N { \Gamma(\beta - N + k) \over 
  \Gamma(\theta(k-1) + \alpha + 1 )
  \Gamma(\beta  - \alpha  - N - \theta (k-1)) \Gamma(k+1)}.
\end{equation}

\subsection{Muttalib--Borodin ensembles allowing negative eigenvalues}% with weights (\ref{w1a})}
\label{S2.5}

We now turn our attention to calculating the normalisation for the Muttalib--Borodin ensembles supported on the full real line, i.e. PDFs (\ref{MB-ensembles1}) with the weights (\ref{w1a}). The tractability of this task relies on exploiting an underlying parity
symmetry. In the case $\theta = 1$, when (\ref{w1}) is the PDF for an Hermitian matrix ensemble with a unitary symmetry, this
property was previously identified in \cite{Fo06}.

\begin{prop}\label{PA1}
Augment the notation $Z_N$ for the normalisations in (\ref{MB-ensembles}) and (\ref{MB-ensembles1}) to include the weight
$w(x)$ by writing $Z_N[w(x)]$, and further write $Z_N^+[w(x)]$ for (\ref{MB-ensembles})  when all eigenvalues are positive.
Let 
\begin{equation}\label{N1}
N_1 := \lfloor(N+1)/2\rfloor, \qquad N_2 = \lfloor N/2\rfloor,
\end{equation}
and suppose $w(x)$ is an even function of $x$. We have
\begin{equation}\label{zw1}
Z_N[w(x)] =  {N! \over N_1! N_2!} Z_{N_1}^+[x^{-1/2} w(x^{1/2})]  Z_{N_2}^+[x^{\theta/2} w(x^{\theta /2})] .
\end{equation}
In particular
\begin{align}\label{zw2} 
Z_N[|x|^{2c} e^{- x^2}]  & =  Z_{N_1}^+[x^{c-1/2} e^{-x}]    Z_{N_2}^+[x^{c+\theta/2} e^{-x}]  \nonumber \\[.2em]
Z_N[|x|^{2c} (1 - x^2)^a]   & =  Z_{N_1}^+[x^{c-1/2} (1 - x)^a ]    Z_{N_2}^+[x^{c+\theta/2} (1 - x)^a ]  \nonumber \\[.2em]
Z_N [{|x|^{2c}/(1+x^2)^a}] &= Z_{N_1}^+[{x^{c-1/2}/(1+x)^a}] Z_{N_2}^+[{x^{c+\theta/2}/(1+x)^a)}]. 
\end{align}
\end{prop}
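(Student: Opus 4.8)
The plan is to convert $Z_N[w(x)]$ into a single $N\times N$ moment determinant, use the evenness of $w$ to split that determinant into two blocks, and recognise each block as the moment determinant of a half-line Muttalib--Borodin ensemble; this is the general-$\theta$ counterpart of the $\theta=1$ reasoning recalled from \cite{Fo06}. Concretely, set $\phi(x):=(\operatorname{sgn}x)\,|x|^{\theta}$ and write the two products of differences in (\ref{MB-ensembles1}) as Vandermonde determinants, $\prod_{1\le i<j\le N}(x_j-x_i)=\det[x_j^{i-1}]_{i,j=1}^N$ and $\prod_{1\le i<j\le N}(\phi(x_j)-\phi(x_i))=\det[\phi(x_j)^{k-1}]_{k,j=1}^N$. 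Absorbing $\prod_k w(x_k)$ into the columns of the first determinant and applying Andr\'eief's (Heine's) identity gives
\[
Z_N[w]=N!\,\det\big[M_{ik}\big]_{i,k=1}^{N},\qquad
M_{ik}:=\int_{\R} w(x)\,x^{i-1}\phi(x)^{k-1}\,dx .
\]
The same identity applied on $(\R^+)^N$ shows that for a weight $u$ (with the obvious convergence hypotheses) one has $Z_M^+[u]=M!\det\big[\int_0^\infty u(x)\,x^{i-1}(x^{\theta})^{k-1}\,dx\big]_{i,k=1}^{M}$; this will be used to identify the blocks below.

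Next comes the parity step. Since $w$ is even and $\phi$ is odd, the integrand of $M_{ik}$ has parity $(-1)^{i+k}$, so $M_{ik}=0$ whenever $i+k$ is odd, while for $i+k$ even $M_{ik}=2\int_0^\infty w(x)\,x^{\,i-1+\theta(k-1)}\,dx$. Thus $M$ is a checkerboard matrix, and reordering its rows and columns by the \emph{same} permutation (odd indices first, then even indices) — which costs a sign $(\operatorname{sgn}\sigma)^2=1$ — brings it to block-diagonal form, $\det M=\det M^{(1)}\det M^{(2)}$, where $M^{(1)}$ is the block with $i,k$ odd and $M^{(2)}$ the block with $i,k$ even. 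The numbers of odd and even indices in $\{1,\dots,N\}$ are exactly $N_1$ and $N_2$ of (\ref{N1}).

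It then remains to evaluate the two blocks, which I would do by the substitution $x^2=y$. Putting $i=2a-1,\ k=2b-1$ gives $M^{(1)}_{ab}=\int_0^\infty y^{-1/2}w(y^{1/2})\,y^{a-1}(y^{\theta})^{b-1}\,dy$, while putting $i=2a,\ k=2b$ gives $M^{(2)}_{ab}=\int_0^\infty y^{\theta/2}w(y^{1/2})\,y^{a-1}(y^{\theta})^{b-1}\,dy$, the extra factors $y^{-1/2}$ and $y^{\theta/2}$ coming from the Jacobian together with the parities of the original exponents. By the half-line Andr\'eief identity noted above, $N_1!\det M^{(1)}=Z_{N_1}^+[x^{-1/2}w(x^{1/2})]$ and $N_2!\det M^{(2)}=Z_{N_2}^+[x^{\theta/2}w(x^{1/2})]$, and combining with $Z_N[w]=N!\det M^{(1)}\det M^{(2)}$ yields (\ref{zw1}). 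The specialisations (\ref{zw2}) then follow by inserting the three even weights of (\ref{w1a}): one checks directly that $x^{-1/2}w(x^{1/2})$ and $x^{\theta/2}w(x^{1/2})$ are the Laguerre, Jacobi and Jacobi-prime weights of (\ref{w1}) with the indicated parameters, whereupon the explicit normalisations (\ref{ZL}), (\ref{Zp}), (\ref{ZLa}) may be substituted.

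The conceptual ingredients — the Andr\'eief reduction and the parity-induced checkerboard structure — are standard, so I expect the only real work to be organisational: keeping the sign in the block decomposition under control (it must come out $+1$, consistently with both sides being positive integrals), and tracking the Jacobian in the $x^2=y$ substitution carefully enough to land exactly on the exponent shifts $-\tfrac12$ and $+\tfrac\theta2$ in the two induced weights.
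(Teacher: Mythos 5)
Your proof is correct and follows essentially the same route as the paper's: the Andr\'eief reduction of $Z_N[w]$ to an $N\times N$ moment determinant, the parity-induced chequerboard structure, the block factorisation into an $N_1\times N_1$ and an $N_2\times N_2$ determinant, and the $x^2=y$ substitution identifying the blocks with $\tfrac{1}{N_1!}Z_{N_1}^+[x^{-1/2}w(x^{1/2})]$ and $\tfrac{1}{N_2!}Z_{N_2}^+[x^{\theta/2}w(x^{1/2})]$. Your careful tracking of the permutation sign and of the Jacobian is sound, and your second induced weight $x^{\theta/2}w(x^{1/2})$ agrees with the paper's proof and with the specialisations (\ref{zw2}) (the argument $w(x^{\theta/2})$ appearing in the statement of (\ref{zw1}) is evidently a typographical slip).
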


\begin{proof}
Applying (\ref{Van}) to write the two products of differences in (\ref{MB-ensembles1}) as Vandermonde
determinants, it is then a standard
calculation (see e.g.~\cite[Proof of Prop.~5.2.1]{Fo10}) to express $Z_N[w(x)]$ as a single determinant,
\begin{equation}\label{VM1}
Z_N[w(x)] = N! \det \bigg [ \int_{-\infty}^\infty w(x) |x|^{j-1+\theta(k-1)} ( {\rm sgn} \, (x))^{j+k} \, dx \bigg ]_{j,k=1,\dots,N}.
\end{equation}
Under the assumption that $w(x)$ is even, the matrix in (\ref{VM1}) has a chequerboard structure, with elements in positions
$(j,k)$ with $j+k$ odd equal to zero. Rearranging the order of rows and columns shows that the determinant can be written
in a block (and therefore factored) form
\begin{equation}\label{VM2}
\det \begin{bmatrix} A & 0_{N_1 \times N_2} \\
 0_{N_2 \times N_1} & B \end{bmatrix} = \det A \det B,
 \end{equation}
 where $N_1, N_2$ are specified by (\ref{N1}), and 
$$
 A   = \bigg [ \int_{-\infty}^\infty w(x) |x|^{2j-2+\theta(2k-2)}\, dx \bigg ]_{j,k=1}^{N_1}, \quad
 B   =  \bigg [ \int_{-\infty}^\infty w(x) |x|^{2j-1+\theta(2k-1)} \, dx \bigg ]_{j,k=1}^{N_2}.  
$$
Consider the matrix $A$. We change variables $x^2 = y$ in the integrand so that the general
element reads $\int_0^\infty y^{-1/2} w(y^{1/2}) y^{j-1 + \theta (k-1)} \, dy$. On the other hand,
the application of analogous working which led to (\ref{VM1}) shows if $u(x)$ is a weight with support on the positive half-line, then
$$
Z_n[u(x)] = n! \det \bigg [ \int_0^\infty u(x) x^{j-1 + \theta(k-1)} \, dx \bigg ]_{j,k=1}^n.
$$
Thus, by comparison we have $A = {1 \over N_1!} Z_{N_1}^+[x^{-1/2} w(x^{1/2}) ]$. Similarly, $B = {1 \over N_2!} Z_{N_2}^+[x^{\theta/2} w(x^{1/2})]$.
 Substituting in (\ref{VM2})  gives  (\ref{zw1}). 
 \end{proof}
 
The normalisations of the ensembles~\eqref{MB-ensembles1} with weights~\eqref{w1a} readily follows from Proposition~\ref{PA1} together with previously found normalisations:~\eqref{Zp},~\eqref{ZL}, and~\eqref{ZLa}.

\section{Selberg integral theory and biorthogonal systems}
\label{S3}

Consider two families of monic polynomials, $\{p_k(x) \}$ and $\{q_k(x)\}$. By adding appropriate columns
in the Vandermonde determinant (\ref{Van}), we see that for the products of differences in (\ref{MB-ensembles})
we can write
\[
\prod_{1 \le j < k \le N} (x_k - x_j) = \det[ p_{j-1}(x_k) ]_{j,k=1}^{N} , 
\qquad\text{and}\qquad
\prod_{1 \le j < k \le N} (x_k^\theta - x_j^\theta)
= \det[ q_{j-1}(x_k^\theta) ]_{j,k=1}^{N}.
\]
It is a standard result in random matrix theory (see e.g.~\cite[\S 5.8]{Fo10}) that by choosing $\{p_k(x)\}$ and
$\{q_k(x)\}$ to have the biorthogonal property~\eqref{biortho-product}, 
the   $n$-point correlation function has the evaluation
\begin{equation}\label{correlation-det}
R^{N}_n(x_1,\ldots,x_n)=\det_{1\leq i,j \leq n}\Bigg[\sum_{k=0}^{N-1}\frac{p_k(x_i)q_k(x^\theta_j)}{h_k}\Bigg].
\end{equation}
It is also an easy result to show that the normalisation $Z_N$ in (\ref{MB-ensembles}) can be expressed in terms
of the normalisations $\{h_k\}$ for the biorthogonal polynomials according to
\begin{equation}\label{normalisation}
Z_N=N!\prod_{k=0}^{N-1}h_k.
\end{equation}

The study of the biorthogonal polynomials associated with the  Jacobi  and Laguerre weights can be traced back to
Didon \cite{Di69} and Deruyts \cite{De86}; these references were brought to modern day attention in
\cite{AV83}. In particular, the explicit series expansion was given for the polynomials
$\{q_k(x)\}$, and a Rodrigues type formula given for the polynomials $\{p_k(x)\}$. In the Laguerre case, unaware of these
earlier works, Konhauser \cite{Ko67} rediscovered these formulae, obtained the normalisation, and also gave integral representations of both families of polynomials,
which are now sometimes referred to as the Konhauser biorthogonal polynomials. Our aim in this section is to demonstrate how
Selberg integral theory relates to the determination of the biorthogonal systems for the weights (\ref{w1}) and  (\ref{w1a}).
In doing so, we will use the fact that the monic biorthogonal polynomials corresponding to the PDF~(\ref{MB-ensembles}) can be written in terms of the averages with respect to the same PDF for $N=k$
\begin{equation}\label{47}
p_k(x)=\E\bigg[\prod_{l=1}^k(x-x_l)\bigg], \qquad
q_k(x)=\E\bigg[\prod_{l=1}^k(x-x_l^\theta)\bigg].
\end{equation}
For $\theta=1$, this is the well-known Heine formula (see e.g.~\cite[Prop.~5.1.4]{Fo10}). The generalisation of the Heine formula to all $\theta>0$ is trivial and has already been made explicit in \cite[Eq.~(3.3)]{FW15}.

\subsection{Normalisations}

For the weights (\ref{w1}), the normalisations $\{h_k\}$ follow from (\ref{Zp}), (\ref{ZL}), (\ref{ZLa}) as a consequence of
(\ref{normalisation}).

\begin{prop}
We have
\begin{align*}
h_k^{\rm J} & = {\Gamma(\theta k + a + 1) \Gamma(k+1) \Gamma(k +b + 1) \Gamma((k + a + b + 1)/\theta) \over
\Gamma((\theta + 1)k + a + b + 2) \Gamma(k+1+(k + a + b + 1)/\theta)}  \\[.2em]
h_k^{\rm L} & = \theta^k \Gamma(\theta k + a + 1) \Gamma(k+1) \\[.2em]
h_k^{\rm Jp} & = { \Gamma(\theta k + \alpha + 1) \Gamma(k+1) 
\Gamma(\beta - \alpha -1- (\theta+1) k)  \Gamma((\beta - \alpha - k -1)/\theta - k +1) 
 \over  \Gamma((\beta - \alpha - k - 1)/\theta +1) \Gamma(\beta - k )}.
\end{align*}
\end{prop}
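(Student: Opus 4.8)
The plan is not to evaluate any new integral but to read off the $h_k$ from the normalisations already obtained in Section~\ref{S2}. The mechanism is the elementary relation~\eqref{normalisation}, $Z_N = N!\prod_{k=0}^{N-1}h_k$, which holds for every $N$ with one and the same sequence $\{h_k\}$ (the biorthogonal system~\eqref{biortho-product} depends only on the weight and on $\theta$, not on $N$). Dividing this identity at $N+1$ by the one at $N$ gives $h_k = \frac{1}{k+1}\,\frac{Z_{k+1}}{Z_k}$, so all that is required is to substitute the closed forms~\eqref{Zp},~\eqref{ZL} and~\eqref{ZLa} for $Z_N^{\rm J}$, $Z_N^{\rm L}$ and $Z_N^{\rm Jp}$, and simplify.

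For the Laguerre weight the simplification is immediate: in~\eqref{ZL} the $N$-dependence resides only in the prefactor $\theta^{-N(N-1)/2}$ and in the upper limit of $\prod_{l=1}^N\Gamma(\theta(l-1)+a+1)\Gamma(l+1)$, so forming the ratio $Z_{k+1}^{\rm L}/Z_k^{\rm L}$ isolates the single factor with $l=k+1$ together with $\theta^{k(k+1)/2-k(k-1)/2}=\theta^k$; dividing by $k+1$ and using $\Gamma(k+2)/(k+1)=\Gamma(k+1)$ yields $h_k^{\rm L}=\theta^k\Gamma(\theta k+a+1)\Gamma(k+1)$. The Jacobi and Jacobi prime cases are where the actual work lies, because~\eqref{Zp} and~\eqref{ZLa} involve gamma products of the type $\prod_{l=1}^N\Gamma(\theta(l-1)+N+a+b+1)$ (and, for Jacobi prime, $\prod_{l=1}^N\Gamma(\beta-N-\theta(l-1))$ and $\prod_{l=1}^N\Gamma(\beta-N+l)$) in which the argument itself moves with $N$, so the ratio does not telescope term by term. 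I would treat a product $\prod_{j=0}^{N-1}\Gamma(\theta j+s_N)$ (with $s_N$ linear in $N$) in its ratio by, first, shifting the summation index so that the quotient $\prod_j\Gamma(\theta j+s)/\Gamma(\theta(j-1)+s)$ appears and telescopes to $\Gamma(\theta k+s)/\Gamma(s)$, and, second, collecting the leftover linear factors as a finite product $\prod_{j=1}^k(\theta j+s)=\theta^k\,\Gamma(k+1+s/\theta)/\Gamma(1+s/\theta)$. It is precisely this last, entirely elementary, product that produces the gamma arguments rescaled by $1/\theta$ seen in the statement --- $\Gamma((k+a+b+1)/\theta)$ and $\Gamma(k+1+(k+a+b+1)/\theta)$ in $h_k^{\rm J}$, and the analogous $(\beta-\alpha-k-1)/\theta$ arguments in $h_k^{\rm Jp}$ (with the relevant arithmetic progression running downwards rather than upwards). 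Recombining these pieces with the remaining, $N$-independent, products handled as in the Laguerre case gives the stated formulas after cancellation.

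I expect the main obstacle to be purely that of bookkeeping: keeping the power of $\theta$, the telescoped gamma quotient and the $1/\theta$-rescaled factors correctly aligned through the cancellations, and matching the result against the claimed closed forms; no new idea beyond~\eqref{normalisation} and standard gamma-function manipulations is needed. A convenient running check is the degeneration $\theta=1$, where the Jacobi and Laguerre Muttalib--Borodin ensembles reduce to the classical Jacobi and Laguerre unitary ensembles and $h_k^{\rm J}$, $h_k^{\rm L}$ must collapse to the familiar squared norms of the monic Jacobi and Laguerre polynomials. One could instead compute the $h_k$ directly from the Heine-type averages~\eqref{47} together with the multi-parameter Selberg integrals of Section~\ref{S2}, but this is a strictly longer route than exploiting~\eqref{normalisation}.
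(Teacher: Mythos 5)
Your proposal is correct and follows essentially the same route as the paper's own proof: both rest on $h_k=Z_{k+1}/((k+1)Z_k)$ from \eqref{normalisation}, substitution of the closed forms \eqref{Zp}, \eqref{ZL}, \eqref{ZLa}, and reduction of the non-telescoping gamma ratio to $\prod_{l=1}^k\big(l-1+(k+a+b+1)/\theta\big)^{-1}$ via $\Gamma(z+1)=z\Gamma(z)$. Your suggested $\theta=1$ cross-check against the classical monic Jacobi norm is worth carrying out explicitly, since $\prod_{l=1}^k\big(l-1+c\big)^{-1}=\Gamma(c)/\Gamma(c+k)$ shows the denominator of the stated $h_k^{\rm J}$ should read $\Gamma\big(k+(k+a+b+1)/\theta\big)$ rather than $\Gamma\big(k+1+(k+a+b+1)/\theta\big)$ (as is also confirmed by $h_0^{\rm J}=B(a+1,b+1)$).
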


\begin{proof}
We know from~\eqref{normalisation} that $h_k=Z_{k+1}/((k+1)Z_k)$. Thus, for the Jacobi case, it follows from~\eqref{Zp} that
\begin{equation}
h_k^{\rm J}=\theta^k\frac{\Gamma(\theta k+a+1)\Gamma(k+b+1)\Gamma(k+1)}{\Gamma((\theta+1)k+a+b+2)}
\prod_{l=1}^k\frac{\Gamma(\theta(l-1)+k+a+b+1)}{\Gamma(\theta(l-1)+k+a+b+2)}.
\end{equation}
Moreover, using the recursive property of the gamma function, $\Gamma(z+1)=z\Gamma(z)$, we see that
\begin{equation}
\theta^k\prod_{l=1}^k\frac{\Gamma(\theta(l-1)+k+a+b+1)}{\Gamma(\theta(l-1)+k+a+b+2)}=
\prod_{l=1}^k\frac{1}{l-1+\frac{k+a+b+1}{\theta}}
=\frac{\Gamma(\frac{k+a+b+1}{\theta})}{\Gamma(k+1+\frac{k+a+b+1}{\theta})},
\end{equation}
which proves the proposition in the Jacobi case. The Laguerre and Jacobi prime cases follows in a similar manner.
\end{proof}

In the case of the Muttalib--Borodin ensembles allowing for negative eigenvalues (\ref{MB-ensembles1}), provided the
weight function $w(x)$ is even, we can choose $p_k(x)$ and $q_k(x)$ to be even (odd) for $k$ even (odd).   Thus we can
write
\begin{equation}\label{pq}
p_{2k+1}(x) = x\, \tilde{p}_k(x^2), \quad q_{2k+1}(x) = x\, \tilde{q}_k(x^2), \quad
p_{2k}(x) =  \tilde{P}_k(x^2), \quad q_{2k}(x) =  \tilde{Q}_k(x^2),
\end{equation}
for some monic polynomials $\{ \tilde{p}_k \}$, $\{ \tilde{q}_k \}$, $\{ \tilde{P}_k \}$, $\{ \tilde{Q}_k \}$. The biorthogonality 
condition
\begin{equation}\label{pq1}
\int_{-\infty}^\infty w(x) p_j(x) ({\rm sgn} \,(x))^k q_k(|x|^{ \theta}) \, dx = h_j \delta_{j,k},
\end{equation}
as relevant for the determination of the correlations of (\ref{MB-ensembles1}) according to (\ref{correlation-det}),
is automatically satisfied for $j$ and $k$ of the opposite parity, while for the same parity (\ref{pq1}) with the substitutions
(\ref{pq}) determines
$\{ \tilde{p}_k, \tilde{q}_k \}$ as a biorthogonal system satisfying
\begin{equation}\label{pq2}
\int_0^\infty x^{\theta/2} w(x^{1/2}) 
 \tilde{p}_j(x)  \tilde{q}_k(x^\theta) \, dx = h_{2j+1} \delta_{j,k},
\end{equation}
and  $\{ \tilde{P}_k, \tilde{Q}_k \}$ as a biorthogonal system satisfying
\begin{equation}\label{pq21}
\int_0^\infty x^{-1/2} w(x^{1/2}) 
 \tilde{P}_j(x)  \tilde{Q}_k(x^\theta) \, dx  = h_{2j} \delta_{j,k}.
\end{equation}
Since with $w(x)$ given by (\ref{w1a}), $ x^{\theta/2} w(x^{1/2})$ and  $x^{-1/2} w(x^{1/2})$ is of the form
(\ref{w1a}), knowledge of the biorthogonal system for the latter completely determines that of the former.
In particular, for the normalisations we have the following specifications.

\begin{prop}
Introduce a notation analogous to that used in Proposition \ref{PA1} to annotate the normalisations $h_k$, $h_k^+$.
We have
\begin{align*}
h_{2k}[|x|^{2c}e^{-x^2}]&=h_k^+[x^{c-1/2}e^{-x}],  &  
h_{2k+1}[|x|^{2c}e^{-x^2}]&=h_k^+[x^{c+\theta/2}e^{-x}] \\
h_{2k}[|x|^{2c}(1-x^2)^\alpha]&=h_k^+[x^{c-1/2}(1-x)^\alpha],  &  
h_{2k+1}[|x|^{2c}(1-x^2)^\alpha]&=h_k^+[x^{c+\theta/2}(1-x)^\alpha] \\
h_{2k}[|x|^{2c}/(1+x^2)^\alpha]&=h_k^+[x^{c-1/2}/(1+x)^\alpha],  & 
h_{2k+1}[|x|^{2c}/(1+x^2)^\alpha]&=h_k^+[x^{c+\theta/2}/(1+x)^\alpha]. 
\end{align*}
\end{prop}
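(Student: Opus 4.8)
The plan is to deduce the final Proposition directly from the structural decomposition already established for the biorthogonal systems of ensembles allowing negative eigenvalues, namely the biorthogonality relations~\eqref{pq2} and~\eqref{pq21}, combined with the substitution of the explicit even weights~\eqref{w1a}. The key observation is that the right-hand sides of~\eqref{pq2} and~\eqref{pq21} are precisely $h_{2j+1}$ and $h_{2j}$ respectively, while the left-hand sides are, after identifying the modified weight $x^{\theta/2} w(x^{1/2})$ (resp.\ $x^{-1/2} w(x^{1/2})$) with a classical weight on $\R^+$, exactly the defining biorthogonality integral~\eqref{biortho-product} for that classical weight. Hence $h_{2j+1}$ equals the normalisation $h_j^+$ for the Muttalib--Borodin ensemble with weight $x^{\theta/2} w(x^{1/2})$, and $h_{2j}$ equals $h_j^+$ for the weight $x^{-1/2} w(x^{1/2})$.

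First I would record the three explicit substitutions. For $w(x) = |x|^{2c} e^{-x^2}$ one has $x^{-1/2} w(x^{1/2}) = x^{c-1/2} e^{-x}$ and $x^{\theta/2} w(x^{1/2}) = x^{c+\theta/2} e^{-x}$, both of Laguerre type; for $w(x) = |x|^{2c}(1-x^2)^\alpha$ one gets $x^{c-1/2}(1-x)^\alpha$ and $x^{c+\theta/2}(1-x)^\alpha$, both of Jacobi type; and for $w(x) = |x|^{2c}/(1+x^2)^\alpha$ one gets $x^{c-1/2}/(1+x)^\alpha$ and $x^{c+\theta/2}/(1+x)^\alpha$, both of Jacobi prime type. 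These are exactly the weights already appearing in~\eqref{zw2} of Proposition~\ref{PA1}, so no new computation is needed beyond reading them off. Then I would invoke~\eqref{pq2} and~\eqref{pq21}: the former says $\{\tilde p_j, \tilde q_j\}$ is the monic biorthogonal system for the weight $x^{\theta/2} w(x^{1/2})$ with normalisations $h_{2j+1}$, the latter that $\{\tilde P_j, \tilde Q_j\}$ is the monic biorthogonal system for the weight $x^{-1/2} w(x^{1/2})$ with normalisations $h_{2j}$. Since the monic biorthogonal polynomials and their normalisations are uniquely determined by the weight (for instance via the Heine-type formula~\eqref{47}, or directly from the moment determinant), matching the weights matches the normalisations, giving the six stated identities.

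I expect the only genuine subtlety to be bookkeeping of the parameter shifts: one must be careful that the ``$c$'' in the generalised weight produces the shifted exponents $c - 1/2$ and $c + \theta/2$ correctly, and that these fall into the admissible parameter ranges for the classical $h_k^+$ formulas (e.g.\ $c - 1/2 > -1$ for convergence of the Laguerre moments, and analogous constraints in the Jacobi and Jacobi prime cases). It is worth also remarking, as the paper does for the normalisations $Z_N$, that the factors of $x^{1/2}$ versus $x^{\theta/2}$ are not interchangeable --- the $\tilde P$ system carries the $x^{-1/2}$ weight and the $\tilde p$ system the $x^{\theta/2}$ weight --- so the even index $2k$ pairs with the $-1/2$ shift and the odd index $2k+1$ with the $+\theta/2$ shift, precisely as in the statement. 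Beyond this the proof is immediate once~\eqref{pq2} and~\eqref{pq21} are in hand, so the bulk of the work has already been done in the preceding discussion, and the proof of this Proposition is essentially a matter of specialising $w$ and reading off~\eqref{pq2}, \eqref{pq21}.
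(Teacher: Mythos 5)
Your proposal is correct and follows essentially the same route as the paper: the paper presents this proposition as an immediate consequence of the parity decomposition~(\ref{pq}) and the resulting biorthogonality relations~(\ref{pq2}) and~(\ref{pq21}), with the identification of $x^{-1/2}w(x^{1/2})$ and $x^{\theta/2}w(x^{1/2})$ as the classical weights~(\ref{w1}), exactly as you argue. Your additional remarks on parameter ranges and on the pairing of even index with the $-1/2$ shift and odd index with the $+\theta/2$ shift are consistent with the paper's treatment.
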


We remark that these inter-relations for $\{h_k\}$ are consistent with the product formula~(\ref{normalisation}) and
the factorisations~(\ref{zw2}).

\subsection{Biorthogonal polynomials of q-type}% $\{q_k(x)\}$}

% Well-known Heine formula, valid for the ensemble (\ref{MB-ensembles}) with $\theta=1$, states that the $N$-th orthogonal polynomials with to the weight $w(x)$ is given as the expectation of the characteristic polynomial (see e.g.~\cite[Prop.~5.1.4]{Fo10}). 
% As already made explicit in \cite[Eq.~(3.3)]{FW15}, an identical result holds for the biorthogonal polynomials for general $\theta>0$.
% The monic biorthogonal polynomials corresponding to the PDF~(\ref{MB-ensembles}) can be written in terms of the averages with respect to the same PDF for $N=k$
% \begin{equation}\label{47}
% p_k(x)=\E\bigg[\prod_{l=1}^k(x-x_l)\bigg], \qquad
% q_k(x)=\E\bigg[\prod_{l=1}^k(x-x_l^\theta)\bigg].
% \end{equation}
% We will use this property to find the biorthogonal polynomials.

The biorthogonal polynomials $\{q_k(x^\theta)\}$ are given by the second of the averages in (44). We have following proposition.

\begin{prop}\label{lemma:q}
Consider the Muttalib--Borodin ensemble~\eqref{MB-ensembles}. The $k$-th biorthogonal polynomial, $q_k(x)$, defined through~\eqref{47}, can also be written as an average over the PDF \eqref{MB-ensembles} with $N=k$, $\theta = 1$, according to
\begin{equation}\label{q-schur}
q_k(x)=\sum_{j=0}^k(-1)^{k-j}\frac{\E^{\theta = 1}[s_{\mu^{(k-j)}}(x_1,\ldots,x_k)]}{\E^{\theta = 1}[s_{\mu^{(0)}}(x_1,\ldots,x_k)]}x^j,
\end{equation}
where $\mu^{(j)}=(\mu_1^{(j)},\ldots,\mu_k^{(j)})$ is the partition with 
\begin{equation}
\mu_l^{(j)}=
\begin{cases}
(\theta-1)(k-l)+\theta, &   l\leq j,\\
(\theta-1)(k-l), &  l>j.
\end{cases}
\end{equation}
\end{prop}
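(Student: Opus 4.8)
The plan is to combine the generalised Heine formula \eqref{47} with the bialternant definition \eqref{schur} of the Schur functions. By \eqref{47} with $N=k$ we have $q_k(x)=\E\big[\prod_{l=1}^k(x-x_l^\theta)\big]$, the average being over \eqref{MB-ensembles} for the given weight $w$ and deformation parameter $\theta$. Expanding the product in elementary symmetric functions, $\prod_{l=1}^k(x-x_l^\theta)=\sum_{j=0}^k(-1)^{k-j}\,e_{k-j}(x_1^\theta,\ldots,x_k^\theta)\,x^j$, so the coefficient of $x^j$ in $q_k(x)$ is $(-1)^{k-j}\,\E\big[e_{k-j}(x_1^\theta,\ldots,x_k^\theta)\big]$. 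Everything then reduces to rewriting each of these averages as a $\theta=1$ Schur average.

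The key is a bialternant manipulation in the variables $t_l:=x_l^\theta$. Multiplying the Vandermonde determinant $\prod_{1\le i<l\le k}(t_i-t_l)=\det[t_l^{k-i}]_{i,l=1}^k$ (recall \eqref{Van}) by $e_r(t_1,\ldots,t_k)$ produces, by the standard bialternant identity — valid as a polynomial identity for every $r\in\{0,\ldots,k\}$ — an alternant whose ordered exponent list is $(k,k-1,\ldots,1,0)$ with the single entry $k-r$ removed. Take $r=k-j$, so the removed entry is $j$; reverting $t_l=x_l^\theta$ rescales every exponent by $\theta$, and one checks term-by-term that the resulting ordered list is precisely $\big(k-i+\mu_i^{(k-j)}\big)_{i=1}^k$ for the array $\mu^{(k-j)}$ of the statement. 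Hence, by \eqref{schur} together with \eqref{Van},
\begin{equation*}
e_{k-j}(x_1^\theta,\ldots,x_k^\theta)\prod_{1\le i<l\le k}(x_i^\theta-x_l^\theta)=s_{\mu^{(k-j)}}(x_1,\ldots,x_k)\prod_{1\le i<l\le k}(x_i-x_l).
\end{equation*}
Since $\prod_{i<l}(x_l-x_i)(x_l^\theta-x_i^\theta)=\prod_{i<l}(x_i-x_l)(x_i^\theta-x_l^\theta)$, multiplying the last display by a further $\prod_{i<l}(x_i-x_l)$ shows that the product-of-differences factor in \eqref{MB-ensembles} with $N=k$, times $e_{k-j}(x_1^\theta,\ldots,x_k^\theta)$, equals $\prod_{i<l}(x_i-x_l)^2\,s_{\mu^{(k-j)}}(x_1,\ldots,x_k)$ — the $\theta=1$ weight of \eqref{MB-ensembles} with the insertion $s_{\mu^{(k-j)}}$, with all signs $(-1)^{\binom{k}{2}}$ cancelling. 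Integrating against $\prod_l w(x_l)$ gives $\E\big[e_{k-j}(x_1^\theta,\ldots,x_k^\theta)\big]=c_k\,\E^{\theta=1}\big[s_{\mu^{(k-j)}}(x_1,\ldots,x_k)\big]$, where $c_k$ is the ratio of the $\theta=1$ normalisation of \eqref{MB-ensembles} at $N=k$ to the corresponding $\theta$ normalisation; crucially, $c_k$ does not depend on $j$.

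Summing, $q_k(x)=c_k\sum_{j=0}^k(-1)^{k-j}\,\E^{\theta=1}\big[s_{\mu^{(k-j)}}\big]\,x^j$, and $c_k$ is fixed by the monicity of $q_k$: at $j=k$ one has $e_0\equiv1$, so the displayed identity gives $s_{\mu^{(0)}}(x_1,\ldots,x_k)=\prod_{i<l}(x_i^\theta-x_l^\theta)\big/\prod_{i<l}(x_i-x_l)$ and the coefficient of $x^k$ equals $c_k\,\E^{\theta=1}[s_{\mu^{(0)}}]$, which must be $1$. Dividing through yields \eqref{q-schur}. (Equivalently, $1/c_k=\E^{\theta=1}[s_{\mu^{(0)}}]$ is the ratio of the $\theta$ to the $\theta=1$ normalisation of \eqref{MB-ensembles}, which may also be verified directly and is consistent with \eqref{normalisation}.)

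I expect the only delicate point to be the exponent bookkeeping in the bialternant step — checking that the alternant obtained from $e_{k-j}$ and the $x^\theta$-Vandermonde has ordered exponent list matching $\big(k-i+\mu_i^{(k-j)}\big)_i$ exactly, so that the identity holds against the determinant in \eqref{schur} with no spurious sign (the order of differences in \eqref{MB-ensembles} being precisely what makes all signs cancel). A related point to keep in mind is that for general $\theta>0$ the arrays $\mu^{(m)}$ are not genuine partitions — their parts are non-integral, and for $\theta<\tfrac12$ not even weakly decreasing — so $s_{\mu^{(m)}}$ must be read throughout as the ratio of determinants \eqref{schur}; this is harmless, since the exponent multiset $\theta\,(\{0,\ldots,k\}\setminus\{j\})$ always consists of $k$ distinct values and the bialternant identity requires no positivity or ordering hypothesis.
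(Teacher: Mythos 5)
Your proposal is correct and follows essentially the same route as the paper: expand $\prod_l(x-x_l^\theta)$ in elementary symmetric functions, then use the bialternant identity to convert $e_{k-j}(x_1^\theta,\ldots,x_k^\theta)$ times the $x^\theta$-Vandermonde into $s_{\mu^{(k-j)}}(x_1,\ldots,x_k)$ times the ordinary Vandermonde, which is exactly the paper's identity $e_r(x_1^\theta,\ldots,x_k^\theta)=s_{\mu^{(r)}}/s_{\mu^{(0)}}$ combined with its change-of-measure formula $\E[f]=\E^{\theta=1}[f\,s_{\mu^{(0)}}]/\E^{\theta=1}[s_{\mu^{(0)}}]$. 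The only cosmetic difference is that you fix the $j$-independent constant by monicity (the $j=k$ term) rather than by writing the change of measure explicitly, and your exponent bookkeeping and the remark that $s_{\mu^{(m)}}$ must be read as the determinant ratio for non-integer $\theta$ are both accurate.
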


\begin{proof}
Expanding the product in the formula (\ref{47})  for $q_k(x)$ in elementary symmetric functions we have
\begin{equation}\label{proof-lem-q/q}
q_k(x)=
\sum_{j=0}^k(-1)^{k-j}\E[e_{k-j}(x_1^\theta,\ldots,x_k^\theta)]x^j.
\end{equation}
Moreover, it follows from~\eqref{schur} that the expectation of a test function, $f(x_1,\ldots,x_k)$, with respect to the joint density~\eqref{MB-ensembles} with $N = k$ may be written as
\begin{equation}\label{proof-lem-q/E}
\E[f(x_1,\ldots,x_k)]=
\frac{\E^{\theta = 1}[f(x_1,\ldots,x_k)s_{\mu^{(0)}}(x_1,\ldots,x_k)]}{\E^{\theta = 1}[s_{\mu^{(0)}}(x_1,\ldots,x_k)]}.
\end{equation}
The final piece needed is to rewrite the elementary symmetric functions as Schur functions; we have
\begin{equation}\label{proof-lem-q/e}
e_r(x_1^\theta,\ldots,x_k^\theta)=s_{(1^r)}(x_1^\theta,\ldots,x_k^\theta)=\frac{s_{\mu^{(r)}}(x_1,\ldots,x_k)}{s_{\mu^{(0)}}(x_1,\ldots,x_k)}.
\end{equation}
Combining~\eqref{proof-lem-q/q}, \eqref{proof-lem-q/E}, and~\eqref{proof-lem-q/e} completes the proof.
\end{proof}

The value of the ratio of averages can be read off from Proposition \ref{P1} in the Jacobi case, the limit $\alpha_2 \to \infty$ of
Proposition \ref{P1} in the Laguerre case, and Proposition \ref{P2} in the Jacobi prime case to give the explicit form of $q_k(x)$
in all these cases.

\begin{corollary}
We have
\begin{align*}
q_k^{\rm J}(x) & = \sum_{j=0}^k (-1)^{k-j } \binom{k}{j} 
\frac{\Gamma(1+a+b+k+\theta j)}{\Gamma(1+a+b+k+\theta k)}\frac{\Gamma(1+a+\theta k)}{\Gamma(1+a+\theta j)}\, x^j \\
q_k^{\rm L}(x) & = \sum_{j=0}^k (-1)^{k-j} \binom{k}{j} 
\frac{\Gamma(1+a+\theta k)}{\Gamma(1+a+\theta j)}\,x^j \\
q_k^{\rm Jp}(x) & = \sum_{j=0}^k (-1)^{k-j } \binom{k}{j} 
\frac{\Gamma(\beta-k-\theta k-\alpha)}{\Gamma(\beta-k-\theta j-\alpha)}\frac{\Gamma(1+\alpha+\theta k)}{\Gamma(1+\alpha+\theta j)}\,x^j.
\end{align*}
% \begin{align*}
% q_k^{\rm J}(x) & = {\Gamma(1 + a + \theta k) \over \Gamma(1 + a + b +  (\theta +1) k)}
% \sum_{j=0}^k (-1)^{k-j } \binom{k}{j} { \Gamma(1 + a + b +  k+ \theta j) \over   \Gamma(1 + a + \theta j) } x^j \\
% q_k^{\rm L}(x) & =  \Gamma(1 + a + \theta k ) 
% \sum_{j=0}^k (-1)^{k-j} \binom{k}{j} { x^j  \over   \Gamma(1 + a + \theta j) } \\
% q_k^{\rm Jp}(x) & =  \Gamma(1 + \alpha + \theta k)  \Gamma(\beta-k- \theta k   - \alpha)
% \sum_{j=0}^k (-1)^{k-j } \binom{k}{j} { x^j \over 
% \Gamma(1 + \alpha +  \theta j  )   \Gamma(\beta-k- \theta j - \alpha)  } .
% \end{align*}
\end{corollary}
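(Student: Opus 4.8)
The starting point is Proposition~\ref{lemma:q}, which already writes $q_k(x)$ as the polynomial $\sum_{j=0}^{k}(-1)^{k-j}\bigl(\E^{\theta=1}[s_{\mu^{(k-j)}}]/\E^{\theta=1}[s_{\mu^{(0)}}]\bigr)\,x^{j}$, the averages being taken over~\eqref{MB-ensembles} for $N=k$ and $\theta=1$; see~\eqref{q-schur}. So everything reduces to evaluating this ratio of Schur averages for the Jacobi, Laguerre and Jacobi-prime weights. For $N=k$ and $\theta=1$ each of these is exactly a classical Selberg-type ensemble, and, using~\eqref{schur} and~\eqref{Van} to rewrite $\det[x_j^{N-i+\lambda_i}]$ as $s_\lambda(x)\prod_{i<j}(x_i-x_j)$, the average of a Schur polynomial over it equals the ratio ${\cal N}_\lambda/{\cal N}_{\bf 0}$ of the normalisations appearing in Proposition~\ref{P1} (Jacobi), in its $\alpha_2\to\infty$ limiting corollary (Laguerre), and in Proposition~\ref{P2} with the identifications $a=\alpha$, $b=\beta-\alpha-2k$ (Jacobi prime). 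The arrays of parts needed here are generally non-integer (and, when $0<\theta<1$, not weakly decreasing); these are handled via the determinant formula~\eqref{schur} for $s_\lambda$ together with the Carlson-theorem argument of the remark following Proposition~\ref{P1}.

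The computation then hinges on a single observation: $\mu^{(r)}$ is obtained from $\mu^{(0)}$ by adding $\theta$ to each of its first $r$ parts, and $\mu^{(0)}_l=(\theta-1)(k-l)$. Consequently, in the ratio ${\cal N}_{\mu^{(r)}}/{\cal N}_{\mu^{(0)}}$ read off from~\eqref{MKK4},~\eqref{MKK4a},~\eqref{MKK4x}, every $\lambda$-independent Gamma factor cancels; the surviving Gamma factors reduce to a product over $1\le m\le r$ of ratios $\Gamma(c_m+\theta)/\Gamma(c_m)$ in which, because $\mu^{(0)}_m=(\theta-1)(k-m)$ turns e.g.\ $a+k+1-m+\mu^{(0)}_m$ into $a+1+\theta(k-m)$, the $c_m$ form an arithmetic progression of step $\theta$, so that the product telescopes to a ratio of two Gamma functions; and in the factor $\prod_{1\le i<j\le k}(\lambda_i-\lambda_j+j-i)$ only the pairs $(i,j)$ with $i\le r<j$ change, and since $\mu^{(0)}_i-\mu^{(0)}_j+j-i=\theta(j-i)$ while the $\mu^{(r)}$ value is $\theta(j-i+1)$, this contributes $\prod_{i=1}^{r}\prod_{j=r+1}^{k}\tfrac{j-i+1}{j-i}=\binom{k}{r}$. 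Collecting everything gives, in the Jacobi case,
\[
\frac{\E^{\theta=1}[s_{\mu^{(r)}}]}{\E^{\theta=1}[s_{\mu^{(0)}}]}
=\binom{k}{r}\,\frac{\Gamma(1+a+\theta k)}{\Gamma(1+a+\theta(k-r))}\,
\frac{\Gamma(1+a+b+k+\theta(k-r))}{\Gamma(1+a+b+k+\theta k)} .
\]
The Laguerre case is the $b\to\infty$ specialisation, in which the last quotient tends to $1$; the Jacobi-prime case follows from Proposition~\ref{P2} in the same way, the factor $\Gamma(b+m-\lambda_m)$ there (carrying $-\lambda_m$) producing an inverted last quotient $\Gamma(\beta-\alpha-k-\theta k)/\Gamma(\beta-\alpha-k-\theta(k-r))$ with $a$ replaced by $\alpha$. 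Substituting $r=k-j$ and using $\binom{k}{k-j}=\binom{k}{j}$ now reproduces the three displayed formulas for $q_k^{\rm J}$, $q_k^{\rm L}$, $q_k^{\rm Jp}$.

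The routine part is the telescoping itself; the step needing genuine care is the Gamma-argument bookkeeping, since the normalisations of Propositions~\ref{P1}--\ref{P2} carry $N$-dependent shifts (here with $N=k$) that must be propagated consistently through the cancellations. One should also check that the partitions $\mu^{(r)}$, $0\le r\le k$, meet the admissibility hypotheses of those propositions — in particular the constraint $\lambda_1<b+1$ in Proposition~\ref{P2}, which under the identification $b=\beta-\alpha-2k$ becomes a constraint on $\beta$ relative to $\alpha,\theta,k$ that holds exactly when the Jacobi-prime ensemble with $N=k$ is itself well defined — and that in the Laguerre case the scaling limit $\alpha_2\to\infty$ may be taken termwise inside the finite sum over $j$.
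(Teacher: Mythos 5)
Your proposal is correct and follows essentially the same route as the paper: expand via Proposition~\ref{lemma:q}, read the Schur averages off Proposition~\ref{P1}, its $\alpha_2\to\infty$ Laguerre limit, and Proposition~\ref{P2}, then telescope the Gamma products, with the binomial coefficient arising from the ratio $s_{\mu^{(r)}}((1)^k)/s_{\mu^{(0)}}((1)^k)$ exactly as in the paper's proof. The only slip is the aside that the Jacobi quotient ``tends to $1$'' as $b\to\infty$ --- without the accompanying rescaling $x\mapsto x/\alpha_2$ it behaves like $b^{\theta(j-k)}$ --- but your primary Laguerre derivation via the limiting corollary of Proposition~\ref{P1} is the paper's and is sound.
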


\begin{proof}
We will give more details for the Jacobi prime case. The ratio of averages in (\ref{q-schur}) is determined by (\ref{SS}).
By noting from (\ref{Sx1}) that
$$
{s_{\mu^{(k-l)}}((1)^k) \over s_{\mu^{(0)}}((1)^k)} = \binom{k}{l},
$$
and noting too that
\begin{align*}
\prod_{l=1}^k {\Gamma(1 + \alpha +  \mu_l^{(k-j)} + k - l) \over \Gamma(1 + \alpha  + \mu_l^{(0)} +k - l)} & =
{\Gamma(1 + \alpha +  \theta k) \over \Gamma(1 + \alpha +  \theta j) } \\
\prod_{l=1}^k {\Gamma(\beta - k   - \mu_l^{(k-j)  } - (k-l) - \alpha ) \over \Gamma(\beta -  k  - \mu_l^{(0)} -(k-l) - \alpha)} & =
 { \Gamma(\beta-k- \theta k   - \alpha) \over   \Gamma(\beta-k- \theta j - \alpha) }, 
 \end{align*}
we thus have that the ratio is fully determined, and the stated result follows.
\end{proof}

\begin{remark}
In the Laguerre and Jacobi cases, these results can be found in \cite{Ko67} and \cite{MT82} respectively.
\end{remark}

With $\{q_k(x)\}$ so determined for the weights (\ref{w1}) we can use the theory in the paragraph beginning with
(\ref{pq}) to specify this same set of  biorthogonal polynomials but now for the weights (\ref{w1a}) as determined by
the condition (\ref{pq1}).

\begin{corollary}\label{P3.6}
Annotate the notation for $q_k(x)$ by writing $q_k(x;w(x))$ in the case of the weights (\ref{w1a}),
and $q_k^+(x;w(x))$ in the case of the weights (\ref{w1}). We have
\begin{align*}
q_{2k}(x; |x|^{2c} e^{- x^2}) & = q_k^+(x^2; x^{-1/2 + c} e^{-x}) \\
q_{2k}(x; |x|^{2c} (1 - x^2)^\alpha) & = q_k^+(x^2; x^{-1/2 + c} (1 - x)^\alpha) \\
q_{2k}(x; |x|^{2c} /(1 + x^2)^\alpha) & = q_k^+(x^2; x^{-1/2 + c} /(1 + x)^\alpha) \\
q_{2k+1}(x; |x|^{2c} e^{-x^2}) & =  x q_k^+(x^2; x^{\theta/2 + c} e^{-x}) \\
q_{2k+1}(x; |x|^{2c} (1 - x^2)^\alpha) & = x q_k^+(x^2; x^{\theta/2 + c} (1 - x)^\alpha) \\
q_{2k+1}(x; |x|^{2c} /(1 + x^2)^\alpha) & = x q_k^+(x^2; x^{\theta/2 + c} /(1 + x)^\alpha).
\end{align*}
\end{corollary}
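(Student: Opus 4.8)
The plan is to obtain all six identities at once from the parity decomposition already recorded in (\ref{pq})--(\ref{pq21}), and then merely to specialise the generic even weight to the three choices in (\ref{w1a}).

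First I would recall why the full-line biorthogonal polynomials $q_k(x;w)$ determined by (\ref{pq1}) may be taken with parity $(-1)^k$: since $w$ is even, the linear system that defines $q_k$ (resp.\ $p_k$) decouples according to the parity of the monomials involved, so its unique monic solution lies in the span of monomials of a single parity. Writing the resulting polynomials as in (\ref{pq}) and inserting them into (\ref{pq1}), the substitution $x^2=y$ turns the relation (\ref{pq1}) into exactly (\ref{pq2}) and (\ref{pq21}); thus $\{\tilde P_j,\tilde Q_j\}$ is a monic biorthogonal system on $\R^+$ for the weight $x^{-1/2}w(x^{1/2})$, and $\{\tilde p_j,\tilde q_j\}$ is one for the weight $x^{\theta/2}w(x^{1/2})$. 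Because the $p$-type member of each pair is monic of degree $j$ --- $\tilde P_j$ from monicity of $p_{2j}(x)=\tilde P_j(x^2)$, and $\tilde p_j$ from that of $p_{2j+1}(x)=x\,\tilde p_j(x^2)$ --- it spans the polynomials of degree $<k$, and the uniqueness of monic biorthogonal polynomials for a given weight identifies $\tilde Q_k(\,\cdot\,)=q_k^+(\,\cdot\,;x^{-1/2}w(x^{1/2}))$ and $\tilde q_k(\,\cdot\,)=q_k^+(\,\cdot\,;x^{\theta/2}w(x^{1/2}))$, in the notation of the corollary. Feeding this back into (\ref{pq}) gives, for every even weight $w$,
\[
q_{2k}(x;w)=q_k^+\!\big(x^2;\,x^{-1/2}w(x^{1/2})\big),\qquad
q_{2k+1}(x;w)=x\,q_k^+\!\big(x^2;\,x^{\theta/2}w(x^{1/2})\big).
\]

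The rest is bookkeeping: for each weight in (\ref{w1a}) one evaluates $x^{-1/2}w(x^{1/2})$ and $x^{\theta/2}w(x^{1/2})$, using $|x^{1/2}|^{2c}=x^{c}$ for $x>0$. One finds $x^{c-1/2}e^{-x}$ and $x^{c+\theta/2}e^{-x}$ in the generalised Gaussian case, $x^{c-1/2}(1-x)^\alpha$ and $x^{c+\theta/2}(1-x)^\alpha$ in the generalised symmetric Jacobi case, and $x^{c-1/2}/(1+x)^\alpha$ and $x^{c+\theta/2}/(1+x)^\alpha$ in the generalised Cauchy case --- in each instance a Laguerre, Jacobi, or Jacobi-prime weight of the type (\ref{w1}). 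Substituting these into the two displayed identities produces the six assertions.

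I do not expect a genuine obstacle: the corollary is essentially a translation of the parity argument of Section~\ref{S2.5} from the level of normalisations to that of the polynomials themselves. The one point deserving care is the uniqueness step --- that the biorthogonal system attached to a weight on $\R^+$ is unique among monic families --- which holds under the usual non-degeneracy of the associated bimoment matrix; this is valid in the classical parameter ranges (e.g.\ $c>-1/2$, so that $x^{c-1/2}$ is integrable at the origin, and $\lambda_1<b+1$ in the Jacobi-prime case as in Proposition~\ref{P2}), and these restrictions are inherited directly from the corresponding statements for the weights (\ref{w1}).
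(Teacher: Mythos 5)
Your argument is correct and is essentially the paper's own: the corollary is stated as a direct consequence of the parity decomposition (\ref{pq})--(\ref{pq21}), with the substitution $x^2=y$ converting (\ref{pq1}) into biorthogonality on $\R^+$ for the weights $x^{-1/2}w(x^{1/2})$ and $x^{\theta/2}w(x^{1/2})$, which for the weights (\ref{w1a}) are of the form (\ref{w1}). Your added remarks on uniqueness of the monic biorthogonal system are a sensible (if implicit in the paper) justification of the identification step.
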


\subsection{Biorthogonal polynomials of p-type}% $\{p_k(x)\}$}

The biorthogonal polynomials $\{p_k(x)\}$ are given by the first of the averages in (\ref{pq}), which is precisely
the averaged characteristic polynomial.
 However, evaluation of this average
using Schur polynomial theory no longer gives the coefficient of $x^j$ as a single ratio of averages of Schur polynomials,
but rather as a multiple sum of such averages.  The mechanism for this is most clearly illustrated by considering the class of
PDFs
\begin{equation}\label{SLa}
{1 \over C_N^{\gamma}[w]} \prod_{l=1}^N w(x_l) \prod_{1 \le i < j \le N} (x_i - x_j)  \det[x_j^{\gamma_i}]_{i,j=1,\dots,N},
\end{equation}
which includes the Jacobi (\ref{MKK5}), Laguerre (\ref{MKK5a}) and Jacobi prime (\ref{MKK5x}) cases.

\begin{prop}\label{Ps}
Denote an average with respect to (\ref{SLa}) by $\langle \cdot \rangle^{(\gamma)}$.
The averaged characteristic polynomial is given in terms of the normalisation $ C_N^{\gamma}[w]$ in 
(\ref{SLa}) according to 
\begin{equation}\label{Syd}
p_k^{\{\gamma\}}(x):=
 \Big \langle \prod_{l=1}^N (x - x_l) \Big \rangle^{(\gamma)} = \sum_{\nu =0}^N (-1)^{N-\nu} x^\nu
 \sum_{1 \le l_1 < \cdots < l_{N-\nu} \le N} 
{ C_N^{\gamma^{\rm l}}[w] \over C_N^{\gamma}[w]} ,
\end{equation}
 where $\gamma^{\rm l} = (\gamma_1^{\rm l}, \dots, \gamma_N^{\rm l})$ with
\begin{equation}\label{Syd1} 
 \gamma^{\rm l} = \left \{ \begin{array}{ll} \gamma_i + 1, & i \in \{l_1,\dots, l_{N-\nu} \} \\
 \gamma_i, & {\rm otherwise}. \end{array} \right.
\end{equation}
\end{prop}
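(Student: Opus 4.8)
The plan is to expand the characteristic polynomial in elementary symmetric functions of the $x_l$, reduce matters to computing the average $\langle e_r(x_1,\dots,x_N)\rangle^{(\gamma)}$ for each $r$, and then evaluate that average by absorbing the factor $e_r$ into the determinant $\det[x_j^{\gamma_i}]$ by means of a Pieri-type identity. First I would write
$$\prod_{l=1}^N(x-x_l) = \sum_{\nu=0}^N(-1)^{N-\nu}\,e_{N-\nu}(x_1,\dots,x_N)\,x^\nu,$$
and use linearity of the average $\langle\cdot\rangle^{(\gamma)}$ to get $p_k^{\{\gamma\}}(x)=\sum_{\nu=0}^N(-1)^{N-\nu}x^\nu\,\langle e_{N-\nu}(x_1,\dots,x_N)\rangle^{(\gamma)}$, so that everything comes down to the averages of the $e_r$.

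The key step is the determinantal identity, valid for arbitrary exponents $\gamma_1,\dots,\gamma_N$,
$$e_r(x_1,\dots,x_N)\,\det\big[x_j^{\gamma_i}\big]_{i,j=1}^N = \sum_{1\le l_1<\cdots<l_r\le N}\det\big[x_j^{\gamma_i^{\rm l}}\big]_{i,j=1}^N,$$
with $\gamma^{\rm l}$ the shifted array defined by $\gamma^{\rm l}_i=\gamma_i+1$ for $i\in\{l_1,\dots,l_r\}$ and $\gamma^{\rm l}_i=\gamma_i$ otherwise, exactly as in \eqref{Syd1}. I would prove this by direct expansion: write $\det[x_j^{\gamma_i}]=\sum_\sigma\operatorname{sgn}(\sigma)\prod_i x_{\sigma(i)}^{\gamma_i}$ and $e_r(x)=\sum_{|S|=r}\prod_{l\in S}x_l$; then $\prod_{l\in S}x_l\cdot\prod_i x_{\sigma(i)}^{\gamma_i}=\prod_i x_{\sigma(i)}^{\gamma_i+\one[i\in T]}$ where $T:=\sigma^{-1}(S)$, and for each fixed $\sigma$ the assignment $S\mapsto T$ is a bijection of the $r$-subsets of $\{1,\dots,N\}$; interchanging the sums over $\sigma$ and over $T$ reconstitutes the determinant $\det[x_j^{\gamma_i+\one[i\in T]}]=\det[x_j^{\gamma^{\rm l}_i}]$ for $T=\{l_1<\cdots<l_r\}$. (This is of course the Pieri rule $s_\mu e_r=\sum_\lambda s_\lambda$ after dividing out the Vandermonde, but the direct argument above needs no assumption that the $\gamma_i$ form a partition, and the case where a shift produces two equal exponents is harmless since it simply contributes a vanishing determinant on both sides.)

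Finally I would substitute this identity with $r=N-\nu$ into the integral representation of the average,
$$\langle e_{N-\nu}(x_1,\dots,x_N)\rangle^{(\gamma)} = \frac{1}{C_N^{\gamma}[w]}\int \prod_{l=1}^N w(x_l)\prod_{1\le i<j\le N}(x_i-x_j)\,e_{N-\nu}(x_1,\dots,x_N)\,\det\big[x_j^{\gamma_i}\big]\,dx_1\cdots dx_N,$$
and observe that, since $C_N^\gamma[w]$ is by definition in \eqref{SLa} the integral of $\prod_l w(x_l)\prod_{i<j}(x_i-x_j)\det[x_j^{\gamma_i}]$ over the support, each summand $\int\prod_l w(x_l)\prod_{i<j}(x_i-x_j)\det[x_j^{\gamma^{\rm l}_i}]\,dx$ is precisely $C_N^{\gamma^{\rm l}}[w]$. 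Hence $\langle e_{N-\nu}\rangle^{(\gamma)}=\sum_{1\le l_1<\cdots<l_{N-\nu}\le N}C_N^{\gamma^{\rm l}}[w]/C_N^{\gamma}[w]$, and inserting this into the expansion of $p_k^{\{\gamma\}}(x)$ gives \eqref{Syd}. The only substantive ingredient is the Pieri-type determinant identity above; the rest is linearity of the average together with the recognition of the resulting integrals as the normalisations of the shifted ensembles, so I expect no real obstacle beyond stating that identity for general $\gamma_i$.
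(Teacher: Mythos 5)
Your proposal is correct and follows essentially the same route as the paper: expand the characteristic polynomial in elementary symmetric functions, absorb $e_{N-\nu}$ into $\det[x_j^{\gamma_i}]$ via the Pieri-type identity, and recognise each resulting integral as the normalisation $C_N^{\gamma^{\rm l}}[w]$. The only difference is that you supply a self-contained permutation-expansion proof of the Pieri-type determinant identity, whereas the paper simply states it as following from the definition of $e_r$ and notes its equivalence to the Pieri formula for Schur polynomials.
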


\begin{proof}
With $e_\nu(x_1,\dots,x_N)$ denoting the $\nu$-th elementary symmetric function, we have the expansion
% $$
% \prod_{l=1}^N (x - x_l) = \sum_{\nu=0}^N (-1)^{N-\nu} x^\nu  e_{N-\nu}(x_1,\dots,x_N),
% $$
% and thus
$$
 \Big \langle \prod_{l=1}^N (x - x_l)  \Big \rangle^{(\gamma)}  = \sum_{\nu=0}^N (-1)^{N-\nu} x^\nu  
  \Big \langle e_{N-\nu}(x_1,\dots,x_N)  \Big \rangle^{(\gamma)};
  $$
  cf.~(\ref{proof-lem-q/q}).
It follows from the definition of the elementary symmetric function that
$$
 e_{N-\nu}(x_1,\dots,x_N)   \det[x_j^{\gamma_i}]_{i,j=1,\dots,N} = \sum_{1 \le l_1 < \cdots < l_{N-\nu} \le N}  \det[x_j^{\gamma_i^{\rm l}}]_{i,j=1,\dots,N}
 $$
(note that this formula is equivalent to the Pieri formula  in the theory of the Schur polynomials,
see e.g.~\cite[Prop.~12.8.4]{Fo10}; another use of the Pieri formula in the evaluation of averages of
characteristic polynomials can be found in \cite{Fo13a}),
so the task is to evaluate
$$
\int_0^\infty dx_1 \cdots \int_0^\infty dx_N \, \prod_{l=1}^N w(x_l)  \prod_{1 \le i < j \le N} (x_i - x_j)  \det[x_j^{{\gamma_i}^{\rm l}}]_{i,j=1,\dots,N}.
$$
But by definition, this is just the normalisation $ C_N^{\gamma^{\rm l}}[w]$, so (\ref{Syd}) follows.
\end{proof}

The general expression the biorthogonal polynomials $\{p_k(x)\}$ as appearing in~(\ref{pq}) can be read off directly from Proposition~\ref{Ps} by an appropriate choice for $\{\gamma\}$. 
Furthermore, in many cases it possible to simplify this expression by reducing the multiple sum in~\eqref{Syd} to a single sum. 
We will first show how this simplification is done in the Laguerre case. Subsequently, we will be able to extend this analysis to the Jacobi and Jacobi prime cases using appropriate modification. 

Recall that in the Laguerre case (\ref{MKK5a}), we have
\begin{equation}\label{n1}
C_N^\gamma[e^{-x}] = N! \prod_{l=1}^N \Gamma(\gamma_l + 1)  \prod_{1 \le i < j \le N} (\gamma_i - \gamma_j).
\end{equation}
Let us introduce a short-hand notation for the multiple sum in (\ref{Syd}),
\begin{equation}\label{Mas}
F_\nu^{\rm L}(\{\gamma_l\}_{l=1}^N) =   \sum_{1 \le l_1 < \cdots < l_{N-\nu} \le N} 
{ C_N^{\gamma^{\rm l}}[e^{-x}] \over C_N^{\gamma}[e^{-x}]}.
\end{equation}
We have the following lemma.

%Further progress can be made in the case of the Jacobi, Laguerre and Jacobi prime weights, when $C_N^\gamma[w]$ can be read
%off from  (\ref{MKK5}),  and (\ref{MKK5x}), telling us that
%\begin{align}\label{n1}
%C_N^\gamma[(1 - x)^{\alpha_2 - 1}] & = N! C_N^{\rm J} \prod_{1 \le i < j \le N} (\gamma_i - \gamma_j), \nonumber \\
%C_N^\gamma[e^{-x}] & = N! \prod_{l=1}^N \Gamma(\gamma_l + 1)  \prod_{1 \le i < j \le N} (\gamma_i - \gamma_j),   \nonumber \\
%C_N^\gamma[1/(1 + x)^{d + 2N}] & = N! C_N^{\rm Jp} \prod_{1 \le i < j \le N} (\gamma_i - \gamma_j),
%\end{align}
%with $C_N^{\rm J}$ specified by (\ref{MKK5x}) and $C_N^{\rm Jp}$ specified by (\ref{MKK6}).
%The simplest of these is the Laguerre case, which permits the multiple sum in Proposition \ref{Ps} to be reduced to a single sum.

\begin{lemma}\label{p3.7}
The multiple sum~\eqref{Mas} satisfies the initial condition
\begin{equation}\label{Ma}
F_0^{\rm L}(\{\gamma_l\}_{l=1}^N) =  (\gamma_1 + 1)  (\gamma_2 + 1) \cdots  (\gamma_N + 1)
\end{equation}
and the recurrence relation
\begin{equation}\label{Syda}
F_\nu^{\rm L}(\{\gamma_l\}_{l=1}^N) \Big |_{\gamma_1 = -1} = F_\nu^{\rm L}(\{\gamma_l + 1\}_{l=2}^N) .
\end{equation}
It follows from these formulas that
\begin{equation}\label{Sydb}
F_\nu^{\rm L}(\{\gamma_l\}_{l=1}^N) = \sum_{s=0}^{\nu} c_s^{(p)} \prod_{l=1}^N (\gamma_l + s+1), \quad  c_s^{(\nu)}  = {(-1)^{\nu-s} 
\over  (\nu-s)! s!}.
\end{equation}
\end{lemma}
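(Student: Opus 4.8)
The plan is to establish the three displayed identities in order: the initial condition \eqref{Ma} by direct inspection, the restriction relation \eqref{Syda} by a bookkeeping of the Vandermonde factors in \eqref{n1}, and then to deduce the closed form \eqref{Sydb} from these by an induction on $N$ together with a degree bound. For \eqref{Ma}: when $\nu=0$ the inner sum in \eqref{Mas} has the single index set $\{l_1,\dots,l_N\}=\{1,\dots,N\}$, so $\gamma^{\rm l}=(\gamma_1+1,\dots,\gamma_N+1)$; in the quotient $C_N^{\gamma^{\rm l}}[e^{-x}]/C_N^\gamma[e^{-x}]$ formed from \eqref{n1} the product $\prod_{i<j}(\gamma_i-\gamma_j)$ is unchanged by the simultaneous shift $\gamma_i\mapsto\gamma_i+1$, while $\prod_l\Gamma(\gamma_l+1)$ supplies the factor $\prod_{l=1}^N\big(\Gamma(\gamma_l+2)/\Gamma(\gamma_l+1)\big)=\prod_{l=1}^N(\gamma_l+1)$, which is \eqref{Ma}.

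For \eqref{Syda}, I would write a generic summand of \eqref{Mas}, using \eqref{n1}, as $\prod_{l\in S}(\gamma_l+1)$ times the Vandermonde quotient $\prod_{i<j}(\gamma_i^{\rm l}-\gamma_j^{\rm l})/\prod_{i<j}(\gamma_i-\gamma_j)$, where $S=\{l_1,\dots,l_{N-\nu}\}$; a pair $(i,j)$ contributes $1$ unless exactly one of $i,j$ lies in $S$, in which case it contributes a factor $(\gamma_i-\gamma_j\pm1)/(\gamma_i-\gamma_j)$. Setting $\gamma_1=-1$ annihilates every summand with $1\in S$ through the prefactor $\gamma_1+1$; in a summand with $S\subseteq\{2,\dots,N\}$ the pairs $(1,j)$ with $j\in S$ evaluate to $(\gamma_j+2)/(\gamma_j+1)$, so that $\prod_{l\in S}(\gamma_l+1)\prod_{j\in S}\frac{\gamma_j+2}{\gamma_j+1}=\prod_{l\in S}(\gamma_l+2)$, whereas the pairs $(i,j)$ with $2\le i<j$ reassemble exactly the Vandermonde quotient for the parameters $\gamma_2+1,\dots,\gamma_N+1$. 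Each surviving summand is therefore a summand of the analogous multiple sum built from $\gamma_2+1,\dots,\gamma_N+1$, and since the surviving sets $S\subseteq\{2,\dots,N\}$ have cardinality $N-\nu$ — the index sets for level $\nu-1$ of an $(N-1)$-variable sum — summation over $S$ produces \eqref{Syda}.

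To pass from \eqref{Ma} and \eqref{Syda} to \eqref{Sydb}, I would induct on $N$. The base case $N=1$ is immediate: there $F_0^{\rm L}(\gamma_1)=\gamma_1+1$ and $F_1^{\rm L}(\gamma_1)=1$, matching $\sum_s c_s^{(\nu)}(\gamma_1+s+1)$ for $\nu=0,1$. For the inductive step assume \eqref{Sydb} with $N-1$ variables; the case $\nu=0$ is \eqref{Ma} (note $c_0^{(0)}=1$), so fix $1\le\nu\le N$ and set $G:=F_\nu^{\rm L}(\{\gamma_l\}_{l=1}^N)-\sum_{s=0}^\nu c_s^{(\nu)}\prod_{l=1}^N(\gamma_l+s+1)$. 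Putting the summands of \eqref{Mas} over the common Vandermonde denominator $\prod_{i<j}(\gamma_i-\gamma_j)$ shows that $F_\nu^{\rm L}$ is an antisymmetric polynomial of degree at most $(N-\nu)+\binom{N}{2}$ divided by that Vandermonde, hence a symmetric polynomial of degree at most $N-\nu\le N-1$; and since $\sum_{s=0}^\nu c_s^{(\nu)}=\frac{1}{\nu!}(1-1)^\nu=0$ for $\nu\ge1$, the degree-$N$ term $\big(\sum_s c_s^{(\nu)}\big)\prod_l\gamma_l$ of the subtracted sum vanishes, so $\deg G\le N-1$. On the other hand, restricting to $\gamma_1=-1$ and combining \eqref{Syda} with the induction hypothesis for $F_{\nu-1}^{\rm L}$ of the variables $\gamma_2+1,\dots,\gamma_N+1$ and the elementary identity $s\,c_s^{(\nu)}=c_{s-1}^{(\nu-1)}$, one checks $G|_{\gamma_1=-1}=0$; by symmetry $G$ vanishes on every hyperplane $\gamma_i=-1$, so $\prod_{i=1}^N(\gamma_i+1)$ divides $G$, and $\deg G\le N-1<N$ forces $G\equiv0$, which is \eqref{Sydb}.

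The most laborious part will be the bookkeeping of the Vandermonde factors in the proof of \eqref{Syda}; the one genuinely structural input is the combination of the degree bound $\deg F_\nu^{\rm L}\le N-\nu$ with the vanishing $\sum_{s=0}^\nu c_s^{(\nu)}=0$ for $\nu\ge1$, which is exactly what lets the divisibility argument for \eqref{Sydb} close.
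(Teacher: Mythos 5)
Your proof is correct, and its first two parts coincide with the paper's: the initial condition is read off from the single $S=\{1,\dots,N\}$ term of \eqref{n1} exactly as in the text, and your bookkeeping of the Vandermonde quotient is precisely the paper's identity \eqref{Sydc}, from which the restriction at $\gamma_1=-1$ is extracted in the same way (note that \eqref{Syda} as printed has $F_\nu$ on both sides; as you correctly observe, the surviving subsets $S\subseteq\{2,\dots,N\}$ of size $N-\nu$ are the index sets of the $(N-1)$-variable sum at level $\nu-1$, and it is this corrected reading that is consistent with $s\,c_s^{(\nu)}=c_{s-1}^{(\nu-1)}$ and that you use later). Where you genuinely diverge is in passing to \eqref{Sydb}. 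The paper first posits the ansatz $F_\nu^{\rm L}=\sum_{s=0}^{N}c_s^{(\nu)}\prod_l(\gamma_l+s+1)$, asserting rather tersely that $\{\prod_l(\gamma_l+s+1)\}_{s=0}^N$ is a basis for the relevant class of symmetric polynomials, and then determines the coefficients from the recursion $s\,c_s^{(\nu)}=c_{s-1}^{(\nu-1)}$ together with $\sum_s c_s^{(\nu)}=0$ and the $\nu=0$ normalisation; the membership of $F_\nu^{\rm L}$ in that $(N+1)$-dimensional span is the one step the paper does not really justify. Your argument sidesteps this: you prove only the degree bound $\deg F_\nu^{\rm L}\le N-\nu$ (by dividing an antisymmetric numerator of degree $(N-\nu)+\binom{N}{2}$ by the Vandermonde), form the difference $G$ with the explicitly conjectured right-hand side, kill its degree-$N$ part via $\sum_s c_s^{(\nu)}=0$, and force $G\equiv 0$ from the divisibility by $\prod_i(\gamma_i+1)$ that symmetry and the recurrence provide. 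The computational ingredients ($s\,c_s^{(\nu)}=c_{s-1}^{(\nu-1)}$, the vanishing of $\sum_s c_s^{(\nu)}$) are the same, but your verification-by-divisibility is logically self-contained and arguably tightens the one soft spot in the published proof.
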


\begin{proof}
According to (\ref{Mas}), in the case $\nu=0$ there is only a single term in the sum, which from (\ref{Syd1}) is equal to
$$
{C_N^\gamma[e^{-x}] \Big |_{\gamma \mapsto \gamma + 1} \over C_N^\gamma[e^{-x}]}.
$$
We now read off from (\ref{n1}) the expression (\ref{Ma}).
Let $\{j_1,\dots,j_\nu\}$ denote $\{1,2,\dots,N\}$ with $\{l_1,l_2,\dots,l_{N-\nu}\}$ as appearing in (\ref{Syd}) removed, where
$ j_1 < j_2 < \cdots <j_\nu$.
We then have from the definition (\ref{Syd1}) and (\ref{n1}) that
\begin{align}\label{Sydc}
{ C_N^{\gamma^{\rm l}}[e^{-x}] \over C_N^{\gamma}[e^{-x}]} & =
\prod_{l = 1 \atop l \ne j_1,\dots, j_{\nu}}^N { ( \gamma_l + 1 - \gamma_{j_1}) \cdots   ( \gamma_l + 1 - \gamma_{j_\nu}) \over
 ( \gamma_l  - \gamma_{j_1}) \cdots   ( \gamma_l  - \gamma_{j_\nu})  } (\gamma_l+1) \nonumber \\
 & = 
\prod_{l = 1 \atop l \ne j_1,\dots, j_{\nu}}^N \Big ( 1 + {1 \over \gamma_l - \gamma_{j_1}} \Big ) \cdots 
\Big ( 1 + {1 \over \gamma_l - \gamma_{j_\nu}} \Big ) (\gamma_l  + 1). 
\end{align}
The recurrence (\ref{Syda}) is immediate.

On the other hand, we must have
\begin{equation}\label{62}
F_\nu^{\rm L}(\{\gamma_l\}_{l=1}^N) =
\sum_{s=0}^{N} c_s^{(\nu)} \prod_{l=1}^N ( \gamma_l + s + 1)
\end{equation}
for some $ c_s^{(\nu)}$ independent of $\{  \gamma_l \}$. The justification for this is that
$F_\nu^{\rm L}(\{\gamma_l\}_{l=1}^N)$ is, from the second equality in (\ref{Sydc}),
a symmetric polynomial in $\{\gamma_l\}_{l=1}^N$ with highest degree 
term no bigger than $e_{N}(\{ \gamma_l + 1\}_{l=1}^N )$, and $\{ \prod_{l=1}^N ( \gamma_l +1 +  s) \}_{s=0}^{N}$ is a basis for symmetric
polynomials in $\{ \gamma_l \}_{l=1}^N $ with highest degree term $e_{N}(\{ \gamma_l +1\}_{l=1}^N )$.

Substituting (\ref{62}) in the functional equation (\ref{Syda}) shows
\begin{equation}\label{63}
\sum_{s=0}^{N-1} s  c_s^{(\nu)} \prod_{l=1}^N ( \gamma_l + k) = \sum_{s=0}^{N-1}  c_s^{(\nu-1)}
 \prod_{l=2}^{N} ( \gamma_l + s + 1).
\end{equation} 
 Equating coefficients of $\{ \prod_{l=1}^N ( \alpha_l + s) \}_{s=0}^{N-1}$ shows that
 $s c_s^{(\nu)} = c_{s-1}^{(\nu-1)}$. Furthermore, since from the second equality
 in (\ref{Sydc}) the coefficient of 
 $\prod_{l=1}^N (\gamma_l + 1)$ is zero for $\nu \ge 1$, we must then have $\sum_{s=0}^{N}   c_s^{(\nu)}  =  0$, while for
 $\nu=0$, $c_0^{(0)} = 1$,  $c_s^{(0)} = 0$ $(s=1,2,\dots)$. These properties together uniquely determine $ c_s^{(\nu)}$
 as given in   (\ref{Sydb}).
\end{proof}

The expression for the averaged characteristic polynomials in the Laguerre case follows directly from Proposition~\ref{Ps} and Lemma~\ref{p3.7}.

\begin{corollary}
The averaged characteristic polynomials for the Laguerre case of (\ref{SLa}) with $N=k$ are given by
\begin{equation}\label{Syde}
p_k^{{\rm L}  \, \{\gamma\}}(x)= \sum_{\nu = 0}^k  x^\nu  \sum_{s=0}^\nu {(-1)^{k - s} \over (\nu - s)! s!}
 \prod_{l=1}^k (\gamma_l + s+1).
\end{equation}
\end{corollary}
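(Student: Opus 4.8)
The plan is to read off the result by feeding the closed form for the multiple sum obtained in Lemma~\ref{p3.7} into the general formula of Proposition~\ref{Ps}. First I would apply Proposition~\ref{Ps} with $N=k$ and $w(x)=e^{-x}$. Since the overall factor $k!$ in the Laguerre normalisation~(\ref{n1}) is the same for $C_k^{\gamma^{\rm l}}[e^{-x}]$ and $C_k^{\gamma}[e^{-x}]$, it cancels in every ratio appearing in~(\ref{Syd}), so that ratio depends only on the gamma-product and the Vandermonde-type product in the $\gamma$'s. Grouping these ratios according to the definition~(\ref{Mas}) of $F_\nu^{\rm L}$, Proposition~\ref{Ps} becomes
\[
p_k^{{\rm L}\,\{\gamma\}}(x)=\sum_{\nu=0}^{k}(-1)^{k-\nu}x^\nu\,F_\nu^{\rm L}(\{\gamma_l\}_{l=1}^{k}).
\]

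Next I would substitute the evaluation~(\ref{Sydb}) of $F_\nu^{\rm L}$ established in Lemma~\ref{p3.7}, namely $F_\nu^{\rm L}(\{\gamma_l\}_{l=1}^{k})=\sum_{s=0}^{\nu}c_s^{(\nu)}\prod_{l=1}^{k}(\gamma_l+s+1)$ with $c_s^{(\nu)}=(-1)^{\nu-s}/((\nu-s)!\,s!)$. This gives a double sum over $\nu$ and $s$ with prefactor $(-1)^{k-\nu}(-1)^{\nu-s}=(-1)^{k-s}$; the summation ranges $0\le s\le\nu\le k$ are already exactly those in the claimed identity, so no reindexing is needed and~(\ref{Syde}) follows directly.

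I do not expect a genuine obstacle here: all of the work has already been done in Proposition~\ref{Ps} (reducing the averaged characteristic polynomial to a ratio of normalisations) and in Lemma~\ref{p3.7} (solving the recurrence for $F_\nu^{\rm L}$ to produce the explicit coefficients $c_s^{(\nu)}$), and what remains is only the bookkeeping of the two sign factors. If one wishes a sanity check, I would additionally specialise $\gamma_l=\theta(l-1)+a$ in~(\ref{Syde}) and confirm that the resulting polynomial reproduces the known Konhauser/Laguerre biorthogonal polynomial referenced in the remark following the Corollary of Section~\ref{S3}, but this is optional and not part of the proof proper.
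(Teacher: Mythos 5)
Your proposal is correct and coincides with the paper's own (one-line) justification: the corollary is obtained exactly by inserting the closed form (\ref{Sydb}) of $F_\nu^{\rm L}$ from Lemma~\ref{p3.7} into Proposition~\ref{Ps} with $N=k$, $w(x)=e^{-x}$, and combining the signs $(-1)^{k-\nu}(-1)^{\nu-s}=(-1)^{k-s}$. The sign bookkeeping and the identification of the inner multiple sum with (\ref{Mas}) are both handled correctly.
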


In order to extend the above derivation beyond the Laguerre case, we first note that the quantity (\ref{Mas}) is equal to $\langle e_\nu(x_1,\dots,x_N) \rangle^{(\gamma)}$. It turns out that the recurrence~(\ref{Syda}) holds independent of the weight function $w(x)$ as long as $w(0) \ne 0$.

\begin{lemma}
Suppose $w(0) \ne 0$. We have
\begin{equation}\label{dd}
\lim_{\gamma_N \to -1} \langle e_\nu(x_1,\dots,x_N) \rangle^{(\gamma)} = \langle e_{\nu}(x_1,\dots,x_{N-1}) \rangle^{(\gamma+1)},
\end{equation}
where the average on the right-hand side is with respect to the PDF (\ref{SLa})  with $N \mapsto N - 1$.
\end{lemma}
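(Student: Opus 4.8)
The plan is to exploit the explicit factored form of the normalisation $C_N^{\gamma}[w]$ dictated by the Vandermonde structure in \eqref{SLa}. The key observation is that, just as in the Laguerre computation \eqref{Sydc}, we may write the normalisation as
\[
C_N^{\gamma}[w] = N! \sum_{\sigma \in S_N} \sigma \det[\cdots] = N! \det\bigl[\, \mu_{j-1+\gamma_i}\,\bigr]_{i,j=1,\dots,N},
\qquad \mu_p := \int_0^\infty x^p\, w(x)\, dx,
\]
after pairing the Vandermonde $\prod_{i<j}(x_i-x_j)$ with the determinant $\det[x_j^{\gamma_i}]$ and integrating term-by-term. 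Then $\langle e_\nu(x_1,\dots,x_N)\rangle^{(\gamma)}$ is, by the Pieri-type identity used in the proof of Proposition~\ref{Ps}, a sum over $(N-\nu)$-subsets of ratios $C_N^{\gamma^{\mathrm l}}[w]/C_N^{\gamma}[w]$; each shifted normalisation $C_N^{\gamma^{\mathrm l}}[w]$ differs from $C_N^{\gamma}[w]$ only through the moments $\mu_{p}$ attached to the shifted indices. First I would make this moment-determinant representation explicit and reduce the claimed identity \eqref{dd} to a purely algebraic statement about such determinants as one row index $\gamma_N$ tends to $-1$.

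The crux is the behaviour of the $\gamma_N$-th row as $\gamma_N \to -1$. When $\gamma_N = -1$, the entries of that row become $\mu_{j-1+\gamma_N} = \mu_{j-2}$ for $j = 1, \dots, N$, so the leading entry is $\mu_{-1} = \int_0^\infty x^{-1} w(x)\,dx$, which diverges precisely because $w(0)\neq 0$. This is exactly why the naive limit fails and why a cancellation must occur: the divergent $\mu_{-1}$ appears in the numerator $C_N^{\gamma^{\mathrm l}}[w]$ only when $N \in \{l_1,\dots,l_{N-\nu}\}$ (so the last index is shifted up to $\gamma_N+1 = 0$ and the divergence is absent) — wait, more carefully, it is $C_N^\gamma[w]$ in the denominator and the unshifted-at-$N$ numerators that carry $\mu_{-1}$. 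So I would factor the divergent quantity out of both numerator and denominator. Concretely, expand every relevant $C_N^{\bullet}[w]$ along the last row; the divergent term is a common multiple of the $(N,1)$ cofactor, and in the ratio defining $\langle e_\nu\rangle^{(\gamma)}$ all finite contributions become negligible against it. After cancelling, what survives is exactly the cofactor expansion corresponding to deleting the first column and last row — i.e. an $(N-1)\times(N-1)$ moment determinant with indices $\gamma_1+1,\dots,\gamma_{N-1}+1$ running against columns $j = 2,\dots,N$, which is precisely $(N-1)!\,C_{N-1}^{\gamma+1}[w]$ up to the same overall divergent prefactor. Matching the subset sums then yields $\langle e_\nu(x_1,\dots,x_{N-1})\rangle^{(\gamma+1)}$.

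I expect the main obstacle to be bookkeeping the subsets $\{l_1,\dots,l_{N-\nu}\}$ correctly through this limit: one must verify that the subsets that include the index $N$ contribute finitely (hence vanish in the ratio after dividing by the divergent denominator) while those excluding $N$ survive and reassemble into the $(N-1)$-variable elementary symmetric average, with the shift $\gamma \mapsto \gamma+1$ on the remaining indices emerging automatically from the column-shift in the cofactor. A cleaner route that avoids explicit determinant manipulation — and the one I would actually write up — is to argue directly at the level of the PDF \eqref{SLa}: as $\gamma_N \to -1$, the factor $x_N^{\gamma_N}$ in $\det[x_j^{\gamma_i}]$ forces, after antisymmetrisation against the Vandermonde and using $w(0)\neq 0$, a delta-like concentration of the marginal of $x_N$ near the origin (this is the standard mechanism behind "removing a particle" in such ensembles; cf.\ the $\alpha_2 \to \infty$ Laguerre limit already used in the text), so that $x_N \to 0$ in probability and $\langle e_\nu(x_1,\dots,x_N)\rangle \to \langle e_\nu(x_1,\dots,x_{N-1})\rangle$ evaluated in the residual ensemble, which one identifies as \eqref{SLa} with $N\mapsto N-1$ and $\gamma\mapsto\gamma+1$ by inspecting the leftover weight and determinant. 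Either way, the recurrence \eqref{Syda} for general $w$ follows, and combined with an initial condition computed from the explicit $C_N^\gamma[w]$ it will pin down the $p$-type polynomials in the Jacobi and Jacobi prime cases by the same scheme as Lemma~\ref{p3.7}.
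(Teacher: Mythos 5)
Your proposal is correct, and your primary (moment--determinant) argument is a genuinely different route from the paper's. The paper works directly with the $N$-fold integral: it expands $\det[x_j^{\gamma_i}]$ so as to isolate the factor $x_N^{\gamma_N}$, then applies the limit formula (\ref{tf}), $\lim_{\gamma_N\to-1}(1+\gamma_N)\int_0^\infty t^{\gamma_N}f(t)\,dt=f(0)$, to numerator and denominator separately; setting $x_N=0$ turns $\prod_{l=1}^{N-1}(x_l-x_N)$ into $\prod_{l=1}^{N-1}x_l$, which is exactly where the shift $\gamma\mapsto\gamma+1$ comes from, and $e_\nu(x_1,\dots,x_{N-1},0)=e_\nu(x_1,\dots,x_{N-1})$ does the rest. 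Your main argument instead integrates out first, writing $C_N^{\gamma}[w]=N!\det[\mu_{N-i+\gamma_j}]$ via the standard Heine/Andr\'eief identity, notes that exactly one entry, $\mu_{\gamma_N}\sim w(0)/(1+\gamma_N)$, diverges, and reads off the limit as a ratio of cofactors. The bookkeeping you flag does close up: the $\nu$-subsets of $\{1,\dots,N\}$ containing $N$ have $\gamma_N^{\rm l}=\gamma_N+1\to 0$ and so give finite, hence negligible, numerator terms, while those avoiding $N$ are precisely the $\nu$-subsets of $\{1,\dots,N-1\}$ and their cofactors are $C_{N-1}^{(\gamma+1)^{\rm l}}[w]/(N-1)!$, reassembling into $\langle e_\nu(x_1,\dots,x_{N-1})\rangle^{(\gamma+1)}$. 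What your route buys is a fully explicit, finite-dimensional cancellation of the divergence; what the paper's route buys is brevity, no Pieri/subset bookkeeping, and immediate applicability to the average of any symmetric test function. Note finally that your ``cleaner'' second sketch is essentially the paper's proof, with one imprecision: since the PDF (\ref{SLa}) is symmetric in $x_1,\dots,x_N$, it is not literally the marginal of $x_N$ that concentrates at the origin (each particle plays the role of the small one with equal probability); the precise version of that heuristic is obtained exactly by the row expansion of the determinant followed by (\ref{tf}).
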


\begin{proof}
It is generally true that
\begin{equation}\label{tf}
\lim_{\gamma_N \to -1} (1 + \gamma_N) \int_0^\infty t^{\gamma_N} f(t) \, dt = f(0),
\end{equation}
see e.g.~\cite[Prop.~4.1.3]{Fo10}; this identity plays a key role in Selberg's \cite{Se44} original proof of
(\ref{3.2}). Expanding the factor $\det [ x_j^{\gamma_i} ]_{i,j=1}^N$ in both the numerator and
denominator of $\langle e_\nu(x_1,\dots,x_N) \rangle^{(\gamma)}$ by the first row, and then applying (\ref{tf}) with $t=x_N$ and
\begin{multline}
f(t)=h_\nu(t)= w(t) \int_{[0,\infty)^{N-1} }e_\nu(x_1,\dots,x_{N-1},t)%\limits_{t<x_1,\dots,x_{N-1}}
\\ \times \prod_{l=1}^{N-1} w(x_l) (x_l - t)
\!\!\!\prod_{1\le j<k\le N-1}\!\!\! (x_j-x_k)\det[x_j^{\gamma_k} ]_{j,k=1}^{N-1}
dx_1\cdots dx_{N-1}
\end{multline}
in the numerator, and $f(t) = h_0(t)$ in the denominator, (\ref{dd}) results.
\end{proof}

Now, examination of the proof of Lemma~\ref{p3.7} shows that the simplicity of the formula (\ref{Sydb}) relies not only on the
recurrence (\ref{Syda}) but also on the factorised polynomial form of the initial condition (\ref{Ma}).
For the Jacobi weight, it follows from (\ref{MKK5}) and (\ref{MKK6}) that the right-hand side of the evaluation formula
(\ref{Ma}) needs to be multiplied by 
$1/\prod_{l=1}^N(\gamma_l+N+\alpha_2)$, taking us outside the class of polynomials. However, the simple dependence (\ref{Ma})
on $\{\gamma_l\}$ again holds in the Jacobi case if we first change variables $x_j = X_j/(1 + X_j)$ with $0 < X_j < \infty$ so that
(\ref{MKK5}) reads
\begin{equation}\label{dde}
{1 \over N!\,C_N^{\rm J}} \prod_{1 \le i < j \le N}{1 \over  (\gamma_i - \gamma_j)}
\prod_{l=1}^N {1 \over (1 + X_l)^{\alpha_2 + N }}
\prod_{1 \le j < k \le N} (X_k- X_j) \det \Big [
\Big ( {X_j \over 1 + X_j} \Big )^{\gamma_i} \Big ]_{i,j=1}^N.
\end{equation}
(We remark that this functional form appears in \cite{FW15} as the eigenvalue PDF for random matrices $(Y^\dagger Y)^{-1} Z^\dagger Z$,
with $Y$ and $Z$ specified as below (\ref{MKK6}).)
Specifically, with respect to the PDF (\ref{dde}) we have
\begin{equation}\label{62i}
\Big \langle \prod_{l=1}^N X_l \Big \rangle^{(\gamma)} =  
\frac{C_N^{\rm J}\big\vert^{\gamma \mapsto \gamma + 1}_{\alpha_2 \mapsto \alpha_2 - 1}}{C_N^{\rm J}}  =
{\Gamma(\alpha_2-1) \over \Gamma(N+\alpha_2-1)} \prod_{l=1}^N (\gamma_l + 1)
\end{equation}
On the other hand, we can check that the recurrence (\ref{dd}) remains valid for the PDF (\ref{dde}),
and so analogous to (\ref{62}) we must have
\begin{equation}\label{62a}
\langle e_\nu(X_1,\dots,X_N) \rangle^{(\gamma)} = \sum_{s=0}^N \tilde{c}_s^{(\nu,N)}
\prod_{l=1}^N(\gamma_l+s+1);
\end{equation}
here though the coefficients $\tilde{c}_s^{(\nu,N)}$ depend on $N$, as well as $\nu$, unlike in (\ref{62}).
Substituting in (\ref{62a}) shows
$$
\sum_{s=0}^N s \tilde{c}_s^{(\nu,N)} \prod_{l=1}^N(\gamma_l + k ) =
\sum_{s=0}^{N-1} \tilde{c}_s^{(\nu-1,N-1)} \prod_{l=1}^{N-1} (\gamma_l + k)
$$
and thus $s \tilde{c}_s^{(\nu,N)} = \tilde{c}_s^{(\nu-1,N-1)}$. We again have the additional relation $\sum_{s=0}^N \tilde{c}_s^{(\nu,N)} = 0$
for $\nu \ge 1$ while for $\nu=0$, $\tilde{c}_0^{(0,N)} = \Gamma(\alpha_2-1) / \Gamma(N + \alpha_2-1)$,
$\tilde{c}_s^{(0,N)} = 0$ $(s=1,2,\dots)$. Consequently
\begin{equation}\label{62j}
\tilde{c}_s^{(\nu,N)}  = {\Gamma(\alpha_2-1) \over \Gamma(N-\nu+\alpha_2-1)}
{(-1)^{\nu-s} \over (\nu-s)! s!}, \quad s \le \nu, \qquad \tilde{c}_s^{(\nu,N)} = 0, \quad s > \nu.
\end{equation}
With this information, the explicit form of $\{ p_k^{\rm J}(x) \}$ for the PDF  (\ref{SLa}) with $w(x) = (1 - x)^{\alpha_2 - 1}$
can be determined.

\begin{corollary}\label{P3.9}
The averaged characteristic polynomials for (\ref{SLa}) with $N=k$ and weight $w(x) = (1 - x)^{\alpha_2 - 1}$ are given by
% Define $\{ p_k^{{\rm J}  \, \{\gamma\}}(x) \}$ as in  (\ref{47}) in the case of the PDF  (\ref{SLa})  with $N = k$ and
% $w(x) = (1 - x)^{\alpha_2 - 1}$. We have
\begin{equation}\label{62y}
 p_k^{{\rm J}  \, \{\gamma\}}(x) = {(1 - x)^k \over \prod_{l=1}^k(\gamma_l+k+\alpha_2)}
  \sum_{\nu = 0}^k {\Gamma(k+\alpha_2) \over \Gamma(k - \nu + \alpha_2)}
 \Big ( {x \over 1 - x} \Big )^\nu \sum_{s=0}^\nu {(-1)^{k - s} \over (\nu - s)! s!}
 \prod_{l=1}^k ( \gamma_l + s + 1).
 \end{equation}
 \end{corollary}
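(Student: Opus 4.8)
The plan is to read off $p_k^{{\rm J}\,\{\gamma\}}(x)$ from Proposition~\ref{Ps} specialised to the Jacobi weight $w(x)=(1-x)^{\alpha_2-1}$ with $N=k$, giving $p_k^{{\rm J}\,\{\gamma\}}(x)=\big\langle\prod_{l=1}^k(x-x_l)\big\rangle^{(\gamma)}$ with the average taken over~\eqref{MKK5} at $N=k$, and then to convert the resulting multiple sum of normalisation ratios into the single-sum form~\eqref{62y} by passing to the PDF~\eqref{dde} via the change of variables $x_j=X_j/(1+X_j)$ and invoking the elementary-symmetric-function averages~\eqref{62a}--\eqref{62j}. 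The reason the detour through~\eqref{dde} is needed is already noted in the text: the ratios $C_k^{\gamma^{\rm l}}/C_k^\gamma$ appearing in~\eqref{Syd} carry factors $(\gamma_i+k+\alpha_2)^{-1}$ and hence are not polynomial in $\{\gamma_l\}$, whereas after the substitution the averages that replace them are.

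Concretely, I would first perform $x_l=X_l/(1+X_l)$ inside $\big\langle\prod_{l=1}^k(x-x_l)\big\rangle^{(\gamma)}$, keeping track of each factor: $1-x_l=(1+X_l)^{-1}$, $x_i-x_j=(X_i-X_j)/[(1+X_i)(1+X_j)]$, $dx_l=(1+X_l)^{-2}\,dX_l$, and, crucially, $x-x_l=(1-x)(y-X_l)/(1+X_l)$ with $y:=x/(1-x)$. The powers of $(1+X_l)$ then recombine so that the transformed integrand is exactly the integrand of~\eqref{dde} with $\alpha_2$ replaced by $\alpha_2+1$, multiplied by the $x$-dependent factor $(1-x)^k\prod_{l=1}^k(y-X_l)$, whence
\[
p_k^{{\rm J}\,\{\gamma\}}(x)=(1-x)^k\,\frac{C_k^{\rm J}\big\vert_{\alpha_2\mapsto\alpha_2+1}}{C_k^{\rm J}}\,\Big\langle\prod_{l=1}^k(y-X_l)\Big\rangle^{(\gamma)},
\]
the average now being with respect to~\eqref{dde} at $N=k$ with parameter $\alpha_2+1$, and, from~\eqref{MKK6}, the ratio of normalisations equalling $\Gamma(k+\alpha_2)/\big(\Gamma(\alpha_2)\prod_{l=1}^k(\gamma_l+k+\alpha_2)\big)$. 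It then remains to expand $\prod_{l=1}^k(y-X_l)=\sum_{\nu=0}^k(-1)^{k-\nu}y^\nu e_{k-\nu}(X_1,\dots,X_k)$, substitute~\eqref{62a}--\eqref{62j} for $\langle e_\mu(X_1,\dots,X_k)\rangle^{(\gamma)}$ (with $N=k$ and $\alpha_2\mapsto\alpha_2+1$), collect the Gamma quotients and powers of $-1$, rewrite $(1-x)^ky^\nu=(1-x)^{k-\nu}x^\nu$, re-index the outer sum, and apply the finite-difference identity that $\sum_{s=0}^m\frac{(-1)^{m-s}}{(m-s)!\,s!}\,q(s)$ vanishes when $\deg q<m$ and equals the leading coefficient of $q$ when $\deg q=m$, in order to match the range of the inner $s$-summation in~\eqref{62y}. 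This yields~\eqref{62y}; a useful consistency check is that the coefficient of $x^k$ collapses to $1$ (so that $p_k^{{\rm J}\,\{\gamma\}}$ is monic, as it must be) and that the rescaling $x\mapsto x/\alpha_2$ followed by $\alpha_2\to\infty$ recovers the Laguerre formula~\eqref{Syde}.

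The step I expect to be the main obstacle is the change of variables in the second paragraph --- specifically, recognising that the residual factor $\prod_{l=1}^k(1+X_l)^{-1}$ produced by $\prod_{l=1}^k(x-x_l)$ is exactly a unit increment of $\alpha_2$ in~\eqref{dde}, and then isolating the accompanying ratio of normalisations with the correct Gamma prefactor. Once that identification is in place, the remaining manipulations are routine and closely parallel those carried out in the Laguerre case in Lemma~\ref{p3.7}.
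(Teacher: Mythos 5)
Your proposal follows the paper's proof essentially verbatim: the paper likewise passes to the PDF (\ref{dde}) via $X_l = x_l/(1-x_l)$, observes that the residual factors $\prod_{l}(1+X_l)^{-1}$ coming from $\prod_l (x-x_l)$ amount to a unit shift of $\alpha_2$ (stated there as expressing the average over (\ref{dde}) at $\alpha_2$ in terms of the Jacobi average at $\alpha_2-1$, i.e.\ your identity with the shift written in the opposite direction), and then combines the normalisation ratio from (\ref{MKK6}) with the elementary-symmetric-function averages (\ref{62a})--(\ref{62j}) summed over $\nu$. The only additions on your side are the explicit Jacobian bookkeeping and the monicity/Laguerre-limit consistency checks, which the paper leaves implicit.
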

 
 \begin{proof}
 Denote an average with respect to  (\ref{dde}) with $N=k$ by $\langle \cdot \rangle^{(\gamma,{\rm JX})}$, and an average
 with respect to    (\ref{SLa}) with $N=k$ and $w(x) = (1 - x)^{\alpha_2 - 1}$ by
 $\langle \cdot \rangle^{(\gamma, {\rm J})}$. By the change of variables $ X_l = x_l/(1 - x_l)$ we have
 $$
\Big \langle \prod_{l=1}^k \Big ( {y \over 1 - y} - X_l \Big ) \Big \rangle^{(\gamma,{\rm JX})} =
{
C_k^{\rm J} \big |_{\alpha_2 \mapsto \alpha_2 - 1} \over C_k^{\rm J} (1 - y)^k} \Big \langle \prod_{l=1}^k (y - x_l) \Big \rangle^{(\gamma,{\rm J})} 
\Big |_{\alpha_2 \mapsto \alpha_2 - 1}.
$$
Rearranging this equation and making use of (\ref{MKK6}), (\ref{62a}) and (\ref{62j}), with the former summed over $\nu$, gives
(\ref{62y}). 
\end{proof}

The Jacobi prime case yields to a similar approach. In the PDF~(\ref{MKK5x}) we change variables $x_j = X_j/(1 - X_j)$ with $0 < X_j < 1$,
 to obtain
\begin{equation}\label{MKK5xa}
{1 \over N!\,C_N^{\rm Jp}}\prod_{1 \le i < j \le N}{1 \over  (\gamma_i - \gamma_j)}
\prod_{l=1}^N  (1 - X_l)^{d +  N - 1} 
\prod_{1 \le i < j \le N} (X_i - X_j)
\det \Big [ \Big ( {X_j \over 1 - X_j}   \Big )^{\gamma_i} \Big ]_{i,j=1}^{N}.
\end{equation}
Analogous to (\ref{62i}) we have
\begin{equation}\label{77}
\Big \langle \prod_{l=1}^N X_l \Big \rangle^{(\gamma)} =    
\frac{C_N^{\rm Jp} \big\vert^{\gamma \mapsto \gamma + 1}_{d \mapsto d + 1}}{C_N^{\rm Jp}}  =
{\Gamma(d + N + 1) \over \Gamma(d + 2N+1)} \prod_{l=1}^N (\gamma_l + 1),
\end{equation}
while the recurrence (\ref{dd}) remains valid. Setting $d+ 2N = \beta$, these two facts allow us to deduce that
\begin{equation}\label{72a}
\langle e_\nu(X_1,\dots,X_N) \rangle^{(\gamma)} = \sum_{s=0}^N  {\alpha}_s^{(\nu,N)}
\prod_{l=1}^N(\gamma_l+s+1),
\end{equation}
with 
\begin{equation}\label{72j}
{\alpha}_s^{(\nu,N)}  = {\Gamma(\beta - N + 1 + \nu ) \over \Gamma(\beta + 1)}
{(-1)^{\nu-s} \over (\nu-s)! s!}, \quad s \le \nu, \qquad \tilde{c}_s^{(\nu,N)} = 0, \quad s > \nu
\end{equation}
(cf.~(\ref{62a}) and (\ref{62j})). Repeating now the working of the proof of Corollary~\ref{P3.9} gives the
explicit form of $\{ p_k^{\rm Jp} \}$ for the PDF (\ref{MKK5x}).

\begin{corollary}\label{P3.10}
The averaged characteristic polynomials for (\ref{SLa}) with $N=k$ and weight $w(x) =1/(1+x)^{\beta}$ are given by
% Define $\{ p_k^{{\rm Jp}, \, \{\gamma\}}(x) \}$ as in  (\ref{47}) in the case of the PDF  (\ref{SLa})  with $N = k$ and 
% $w(x) = 1/(1 + x)^{\beta}$. We have
\begin{align}\label{62x}
 p_k^{{\rm Jp} \, \{\gamma\}} (x)  = 
     { (1 + x)^k  \over \prod_{l=1}^k(\beta - k - \gamma_l - 1)} 
  \sum_{\nu = 0}^k \frac{\Gamma(\beta- k  + \nu )}{\Gamma(\beta - k)} 
 \Big ( {x \over 1 + x} \Big )^\nu  \sum_{s=0}^\nu {(-1)^{k - s} \over (\nu - s)! s!}
 \prod_{l=1}^k ( \gamma_l + s + 1).
 \end{align}
 \end{corollary}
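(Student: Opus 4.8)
The plan is to repeat, essentially verbatim, the argument used to prove Corollary~\ref{P3.9}, with the substitution adapted to the Jacobi prime weight. Write $\langle\,\cdot\,\rangle^{(\gamma,{\rm JpX})}$ for an average with respect to the PDF \eqref{MKK5xa} with $N=k$, and $\langle\,\cdot\,\rangle^{(\gamma,{\rm Jp})}$ for an average with respect to \eqref{SLa} with $N=k$, $w(x)=1/(1+x)^\beta$ and $\beta=d+2k$; by construction the change of variables $x_l=X_l/(1-X_l)$ carries one density into the other. For $y\in(0,\infty)$ put $Y=y/(1+y)\in(0,1)$. The elementary identity
\[
\frac{y}{1+y}-\frac{x_l}{1+x_l}=\frac{y-x_l}{(1+y)(1+x_l)}
\]
gives $\prod_{l=1}^k(Y-X_l)=(1+y)^{-k}\prod_{l=1}^k(1+x_l)^{-1}\prod_{l=1}^k(y-x_l)$, and absorbing the factor $\prod_l(1+x_l)^{-1}$ into the weight $\prod_l(1+x_l)^{-\beta}$ sends $\beta\mapsto\beta+1$ (equivalently $d\mapsto d+1$) while changing the normalisation of \eqref{SLa} by the ratio of the constants \eqref{MKK6x}. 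This yields the exact analogue of the displayed identity in the proof of Corollary~\ref{P3.9}, namely
\[
\Big\langle\prod_{l=1}^k\big(\tfrac{y}{1+y}-X_l\big)\Big\rangle^{(\gamma,{\rm JpX})}
=\frac{C_k^{\rm Jp}\big|_{d\mapsto d+1}}{C_k^{\rm Jp}\,(1+y)^k}\,
\Big\langle\prod_{l=1}^k(y-x_l)\Big\rangle^{(\gamma,{\rm Jp})}\Big|_{d\mapsto d+1}.
\]

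Next I would solve this for $p_k^{{\rm Jp}\,\{\gamma\}}(y)=\langle\prod_l(y-x_l)\rangle^{(\gamma,{\rm Jp})}$ by applying the identity with $d$ replaced by $d-1$; using \eqref{MKK6x}, the surviving normalisation ratio $C_k^{\rm Jp}|_{d\mapsto d-1}/C_k^{\rm Jp}$ reduces to $\Gamma(\beta)/(\Gamma(\beta-k)\prod_{l=1}^k(\beta-k-\gamma_l-1))$, which together with the $(1+y)^k$ accounts for the prefactor in \eqref{62x}. For the remaining factor $\langle\prod_l(Y-X_l)\rangle^{(\gamma,{\rm JpX})}$ I would expand the product in powers of $Y$, reducing it to a linear combination of the averages $\langle e_r(X_1,\dots,X_k)\rangle^{(\gamma,{\rm JpX})}$ of elementary symmetric functions, and then invoke \eqref{72a}--\eqref{72j} (valid here because \eqref{77} and the recurrence \eqref{dd} hold for \eqref{MKK5xa}) to express each such average as the explicit finite combination of the products $\prod_{l=1}^k(\gamma_l+s+1)$. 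The $\Gamma$-prefactor $\Gamma(\beta-k+1+\nu)/\Gamma(\beta+1)$ of \eqref{72j}, evaluated at $d\mapsto d-1$, supplies the weight $\Gamma(\beta-k+\nu)/\Gamma(\beta-k)$ of the $\nu$-th term, while the powers of $-1$ combine to give the inner sum $\sum_{s=0}^\nu\frac{(-1)^{k-s}}{(\nu-s)!s!}\prod_l(\gamma_l+s+1)$. Finally, since $y^\nu(1+y)^{k-\nu}=(1+y)^k(y/(1+y))^\nu$, collecting everything reproduces \eqref{62x} with $y$ renamed $x$.

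Structurally there is nothing hard here; the only real obstacle is bookkeeping. One must keep straight the direction of the weight-exponent shift (absorbing $1/\prod(1+x_l)$ raises $\beta$ by one, so the polynomial at the original $\beta$ is recovered by running the identity at $\beta-1$), carry $\beta=d+2k$ consistently through \eqref{MKK6x} and \eqref{72j}, and match the index conventions in \eqref{72a}--\eqref{72j} against the elementary-symmetric-function expansion of $\prod_l(Y-X_l)$ so the $\Gamma$-factors and signs land in the stated places. A quick check at $k=1$, where $p_1^{{\rm Jp}\,\{\gamma\}}(x)$ is directly computable from the beta prime moments, is the surest way to fix all constants and signs before assembling the general $k$ calculation.
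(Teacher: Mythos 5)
Your proposal is correct and follows essentially the same route as the paper, which at this point simply says to repeat the working of the proof of Corollary~\ref{P3.9} using the change of variables to~\eqref{MKK5xa}, the moment evaluation~\eqref{77}, and the coefficients~\eqref{72a}--\eqref{72j}. You have merely made explicit the details the paper leaves implicit (the direction of the shift $d\mapsto d\pm1$ and the normalisation ratio $C_k^{\rm Jp}|_{d\mapsto d-1}/C_k^{\rm Jp}=\Gamma(\beta)/(\Gamma(\beta-k)\prod_{l=1}^k(\beta-k-\gamma_l-1))$), and these check out against~\eqref{MKK6x} and~\eqref{62x}.
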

 
 Knowledge of the averaged characteristic polynomial for the PDF  (\ref{SLa})  with $N = k$ and $w(x) = e^{-x}, \,
 (1 - x)^{\alpha_2 - 1}, \, 1/(1 + x)^{\beta}$ as given by (\ref{Syde}),  (\ref{62y}) and (\ref{62x}) respectively allows
 us to give explicit formulas for the biorthogonal polynomials $\{ p_k(x) \}$ as specified by (\ref{47})  with Laguerre,
 Jacobi and Jacobi primes weights respectively.
 
 \begin{corollary}
 Consider the Muttalib--Borodin ensemble (\ref{MB-ensembles}) with $N= k$ and  the weights (\ref{w1}). 
 The biorthogonal polynomials $\{ p_k(x) \}$ as specified by (\ref{47}) are given by
 \begin{align}
 p_k^{\rm L}(x) & =  \theta^k \sum_{\nu = 0}^k x^\nu  \sum_{s=0}^\nu {(-1)^{k- s} \over (\nu - s)! s!}
 {\Gamma(k + (a + s + 1)/\theta) \over \Gamma( (a + s + 1)/\theta) } \\
  p_k^{{\rm J} }(x)  & =    (1 - x)^k {\Gamma((a+ b + k + 1)/\theta) \over \Gamma(k + (a+b+k+1)/\theta)}  \nonumber \\
  & \times \sum_{\nu = 0}^k {\Gamma(k+b+1) \over \Gamma(k - \nu + b + 1)}  \Big ( {x \over 1 - x} \Big )^\nu
 \sum_{s=0}^\nu {(-1)^{k - s} \over (\nu - s)! s!}
 {\Gamma(k + (a + s + 1)/\theta) \over \Gamma( (a + s + 1)/\theta) }    \\
  p_k^{{\rm Jp}} (x)  & =  (1 + x)^k  {1 \over \Gamma(\beta - k)}
   {\Gamma((\beta - k - \alpha - 1)/\theta - k + 1) \over  \Gamma((\beta - k - \alpha - 1)/\theta  + 1) } \nonumber \\
 & \times
  \sum_{\nu = 0}^k \Gamma(\beta - k  - \nu ) 
 \Big ( {x \over 1 + x} \Big )^\nu  \sum_{s=0}^\nu {(-1)^{k - s} \over (\nu - s)! s!}
  {\Gamma(k + (\alpha + s + 1)/\theta) \over \Gamma( (\alpha + s + 1)/\theta) } .
 \end{align}
 \end{corollary}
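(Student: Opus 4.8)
The plan is to deduce all three formulas by specialising the averaged characteristic polynomials of the generalised ensemble (\ref{SLa}) already recorded in (\ref{Syde}), (\ref{62y}) and (\ref{62x}). The key point is the identification, made below (\ref{MKK5}), (\ref{MKK5a}) and (\ref{MKK5x}), that the choice $\gamma_l = \theta(l-1)+a$ (or $\gamma_l = \theta(l-1)+\alpha$ in the Jacobi prime case), with $N = k$, reduces the PDF (\ref{SLa}) with weight $w(x) = e^{-x}$, $(1-x)^{\alpha_2-1}$ or $1/(1+x)^\beta$ exactly to the Muttalib--Borodin ensemble (\ref{MB-ensembles}) with the corresponding classical weight from (\ref{w1}) (with $b = \alpha_2 - 1$ in the Jacobi case), the per-variable factor $x^a$ being absorbed into the specialised determinant $\det[x_j^{\gamma_i}]$. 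Consequently the averaged characteristic polynomial $p_k^{\{\gamma\}}(x)$ under this specialisation is precisely the biorthogonal polynomial $p_k(x)$ of (\ref{47}), and it remains only to carry out the substitution $\gamma_l = \theta(l-1) + a$ (or $+\alpha$) in (\ref{Syde}), (\ref{62y}), (\ref{62x}) and simplify.

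The simplification amounts to evaluating the products over $\{\gamma_l\}_{l=1}^k$ occurring in those formulas. Each is a product over an arithmetic progression of step $\theta$, and collapses to a ratio of gamma functions via the elementary identities
\begin{equation*}
\prod_{l=1}^k\big(\theta(l-1)+z\big) = \theta^k\,\frac{\Gamma(k+z/\theta)}{\Gamma(z/\theta)}, \qquad
\prod_{l=1}^k\big(B-\theta(l-1)\big) = \theta^k\,\frac{\Gamma(B/\theta+1)}{\Gamma(B/\theta-k+1)}.
\end{equation*}
The first, with $z = a+s+1$ (resp.\ $\alpha + s + 1$), turns $\prod_{l=1}^k(\gamma_l+s+1)$ into $\theta^k\Gamma(k+(a+s+1)/\theta)/\Gamma((a+s+1)/\theta)$; with $z = a+b+k+1$ it evaluates the denominator product $\prod_{l=1}^k(\gamma_l + k + \alpha_2)$ in (\ref{62y}); and the second, with $B = \beta - k - \alpha - 1$, evaluates the denominator product $\prod_{l=1}^k(\beta - k - \gamma_l - 1)$ in (\ref{62x}). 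Substituting these closed forms and collecting terms yields the three displayed expressions.

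The one thing to watch --- and the reason the three results look slightly different --- is the bookkeeping of the powers of $\theta$. In the Laguerre case there is no denominator product, so the factor $\theta^k$ produced by $\prod_{l=1}^k(\gamma_l+s+1)$ survives as the overall prefactor of $p_k^{\rm L}$. In the Jacobi and Jacobi prime cases the $\theta^k$ from the numerator product is exactly cancelled by the $\theta^{-k}$ from the reciprocal of the denominator product, so no power of $\theta$ appears; what survives of the denominator product is the gamma prefactor $\Gamma((a+b+k+1)/\theta)/\Gamma(k+(a+b+k+1)/\theta)$, respectively $\Gamma((\beta-k-\alpha-1)/\theta-k+1)/\Gamma((\beta-k-\alpha-1)/\theta+1)$. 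I do not expect any genuine obstacle here: all the structural work was carried out in obtaining (\ref{Syde}), (\ref{62y}), (\ref{62x}) via Lemma~\ref{p3.7} and Corollaries~\ref{P3.9} and~\ref{P3.10}, so the remaining task is purely the routine tracking of gamma-function shifts and factors of $\theta$.
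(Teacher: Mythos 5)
Your proposal is correct and coincides with the paper's own (one-line) proof: the paper likewise obtains all three formulas by substituting $\gamma_j=\theta(j-1)+a$ into (\ref{Syde}) and (\ref{62y}) (with $\alpha_2=b+1$) and $\gamma_j=\theta(j-1)+\alpha$ into (\ref{62x}), the only additional content in your write-up being the explicit gamma-function identities for the arithmetic-progression products and the tracking of the $\theta^{\pm k}$ factors, which the paper leaves implicit.
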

 
 \begin{proof}
 In  (\ref{Syde}) we substitute $\gamma_j = \theta(j-1) + a$. We substitute the same in (\ref{62y}) and also set $\alpha_2 = b+1$,
 while in  (\ref{62x})  we substitute $\gamma_j = \theta(j-1) + \alpha$ .
 \end{proof}
 
 \begin{remark}
In the Laguerre and Jacobi cases, these formulas (up to the normalisation, which we have specified by requiring that
the polynomials be monic, and the replacement $x \mapsto (1-x)/2$ in the Jacobi case) can be found in \cite{Ca68}
and \cite{MT82} respectively.
\end{remark}

The theory in the paragraph beginning with
(\ref{pq}) allows us to specify the biorthogonal polynomials $\{p_k(x)\}$ in the case of the weights (\ref{w1a}) in
terms of this families  $\{p_k(x)\}$  for the weights  (\ref{w1}) as just determined.

\begin{corollary}
Annotate the notation for $p_k(x)$ by writing $p_k(x;w(x))$ in the case of the weights (\ref{w1a}),
and $p_k^+(x;w(x))$ in the case of the weights (\ref{w1}). The formulas of Corollary \ref{P3.6} again hold
with each $q_j$ replaced by $p_j$, and each $q_j^+$ replaced by $p_j^+$.
\end{corollary}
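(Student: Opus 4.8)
The plan is to re-run, verbatim, the argument behind Corollary~\ref{P3.6}, now for the $p$-type family rather than the $q$-type one. The starting point is the parity decomposition recorded in the paragraph beginning with~(\ref{pq}): when the weight $w(x)$ in~(\ref{w1a}) is even, the monic biorthogonal polynomials of~(\ref{MB-ensembles1}) subject to~(\ref{pq1}) can be chosen to respect the parity of the index, so that $p_{2k}(x)=\tilde P_k(x^2)$ and $p_{2k+1}(x)=x\,\tilde p_k(x^2)$ for monic $\{\tilde P_k\}$, $\{\tilde p_k\}$, and likewise for $\{q_k\}$ with $\{\tilde Q_k\}$, $\{\tilde q_k\}$. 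I would emphasise that this decomposition is insensitive to the $p$/$q$ distinction: both families arise from the Heine-type average~(\ref{47}), and the parity argument applies to each. Substituting~(\ref{pq}) into~(\ref{pq1}) then shows, exactly as in the derivation of~(\ref{pq2}) and~(\ref{pq21}), that $\{\tilde P_k,\tilde Q_k\}$ is a biorthogonal system for the half-line weight $x^{-1/2}w(x^{1/2})$ in the sense of~(\ref{pq21}), and $\{\tilde p_k,\tilde q_k\}$ a biorthogonal system for the half-line weight $x^{\theta/2}w(x^{1/2})$ in the sense of~(\ref{pq2}).

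The key step is uniqueness. A monic biorthogonal family attached to a Muttalib--Borodin weight $u(x)$ on $\R^+$ is pinned down by the requirement that $P_j$ be monic of degree $j$ with $\int_0^\infty u(x)P_j(x)x^{\ell\theta}\,dx=0$ for all $\ell<j$ (equivalently, by~(\ref{47}) with $w\mapsto u$), and similarly for $Q_k$; this applies to the weights~(\ref{w1}) and to the half-line weights just obtained. Hence $\tilde P_k=p_k^+(\,\cdot\,;x^{-1/2}w(x^{1/2}))$ and $\tilde p_k=p_k^+(\,\cdot\,;x^{\theta/2}w(x^{1/2}))$. Since for each of the three choices of $w$ in~(\ref{w1a}) the weights $x^{-1/2}w(x^{1/2})$ and $x^{\theta/2}w(x^{1/2})$ are of the form~(\ref{w1}) with the parameters already displayed in the statement of Corollary~\ref{P3.6} (this is the same identification used after~(\ref{pq21}) and in Proposition~\ref{PA1}), feeding these back through~(\ref{pq}) gives
\[
p_{2k}(x;w)=\tilde P_k(x^2)=p_k^+\bigl(x^2;x^{-1/2}w(x^{1/2})\bigr),\qquad p_{2k+1}(x;w)=x\,\tilde p_k(x^2)=x\,p_k^+\bigl(x^2;x^{\theta/2}w(x^{1/2})\bigr),
\]
which is precisely the table of Corollary~\ref{P3.6} with every $q$ replaced by $p$. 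The only thing to verify en route is that monicity survives the change of variable: if $\tilde P_k$ is monic of degree $k$, then $\tilde P_k(x^2)$ is monic of degree $2k$ and $x\,\tilde p_k(x^2)$ is monic of degree $2k+1$, exactly as required of $p_{2k}$ and $p_{2k+1}$.

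I do not expect a genuine obstacle; the content is entirely the bookkeeping of the previous subsections. The one point meriting a sentence is the non-degeneracy underlying the uniqueness argument --- that the bimoment matrices $\bigl[\int u(x)\,x^{i-1}x^{(j-1)\theta}\,dx\bigr]_{i,j}$ are invertible for the weights~(\ref{w1}), so that both the $p$- and $q$-families are well defined --- but this is already implicit in Section~\ref{S3}, since via~(\ref{normalisation}) the product $\prod_{k}h_k$ is nonzero there. Granting this, the corollary follows by copying the proof of Corollary~\ref{P3.6} with $q_j,q_j^+$ replaced throughout by $p_j,p_j^+$.
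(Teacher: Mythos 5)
Your argument is correct and is essentially the paper's own: the paper offers no separate proof for this corollary, deferring entirely to the parity decomposition (\ref{pq}) and the transfer of the biorthogonality condition (\ref{pq1}) to the half-line systems (\ref{pq2}) and (\ref{pq21}), which is exactly what you reproduce. Your explicit remark on uniqueness of the monic biorthogonal family (via non-degeneracy of the bimoment matrix, equivalently $h_k\neq 0$) is a small but welcome addition that the paper leaves implicit.
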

 
 \begin{remark}
 The biorthogonal system  specified by (\ref{pq1}) in the case of the generalised Gaussian weight in (\ref{w1a})
 has previously been studied in \cite{TM86}.
 \end{remark}

 \subsection*{Acknowledgements}
 This work was supported by the Australian Research Council, through grant DP170102028, 
and through the  Centre of Excellence for Mathematical and Statistical Frontiers.

%\bibliographystyle{amsplain}
%\bibliography{book1t16}

\providecommand{\bysame}{\leavevmode\hbox to3em{\hrulefill}\thinspace}
\providecommand{\MR}{\relax\ifhmode\unskip\space\fi MR }
% \MRhref is called by the amsart/book/proc definition of \MR.
\providecommand{\MRhref}[2]{%
  \href{http://www.ams.org/mathscinet-getitem?mr=#1}{#2}
}
\providecommand{\href}[2]{#2}

\end{document}